\numberwithin{theorem}{section}
\numberwithin{equation}{section}
\numberwithin{lemma}{section}
\numberwithin{definition}{section}
\numberwithin{remark}{section}
\numberwithin{example}{section}
\numberwithin{corollary}{section}
\numberwithin{proposition}{section}
\def\mbf#1{\mathchoice{\hbox{\boldmath $\displaystyle #1$}}
        {\hbox{\boldmath $\textstyle #1$}}
        {\hbox{\boldmath $\scriptstyle #1$}}
        {\hbox{\boldmath $\scriptscriptstyle #1$}}}
\newcommand{\X}{{\mbf X}}
\renewcommand{\H}{{\mbf H}}
\newcommand{\bI}{{\mbf I}}
\newcommand{\NTB}{NTB} 
\newcommand{\bINTB}{{\mbf I}_{\text{\NTB}}}
\newcommand{\Y}{{\mbf Y}}
\newcommand{\K}{{\mbf K}}
\newcommand{\Lp}[1][p]{\boldsymbol{L}^{#1}}
\newcommand{\R}{{\mathbb R}}
\newcommand{\E}{{\mathbf E}}
\newcommand{\sB}{\mathcal{B}}
\renewcommand{\P}{\mathbf{P}}
\newcommand{\one}{\mathbf{1}}
\newcommand{\eps}{\varepsilon}
\newcommand{\rhos}{\mathsf{R}}
\newcommand{\rhox}{\mathcal{R}}
\newcommand{\rhot}{\rhox_{\mathrm{\Sigma}}}
\newcommand{\rhonot}{\mathsf{R}_{0}}
\newcommand{\salg}{\mathfrak{F}}
\newcommand{\risk}{r}
\newcommand{\vecrisk}{\mathbf{r}}
\DeclareMathOperator{\cl}{cl}
\DeclareMathOperator{\ES}{ES}
\DeclareMathOperator{\VaR}{VaR}
\renewcommand{\subset}{\subseteq}
\renewcommand{\supset}{\supseteq}
\newlength{\querylen}
\begin{document}

\title{Intragroup transfers, intragroup diversification and their risk
  assessment\footnote{The authors are grateful to Philipp Keller for many interesting and
important remarks on the misuse of IGTs. The comments of the referee
have lead to several improvements of the manuscript. IM was partially supported by
Swiss National Foundation grant 200021\_153597.} }

\author{Andreas Haier \and Ilya Molchanov \and Michael Schmutz}

\institute{A. Haier, I. Molchanov, M. Schmutz \at
  Department of Mathematical Statistics and Actuarial Science,
  University of Bern, Sidlerstrasse 5, 3012 Bern, Switzerland\\
  Tel: +41 31 6318801\\
  Fax: +41 31 631 3870\\
  \email{andreas.haier@gmail.com, ilya.molchanov@stat.unibe.ch, michael.schmutz@stat.unibe.ch}
}

\maketitle

\begin{abstract}
  When assessing group solvency, an important question is to what extent
  intragroup transfers may be taken into account,
  as this determines to which extent diversification can be achieved.
  We suggest a framework to explicitly
  describe the families of admissible transfers that range from the
  free movement of capital to excluding any transactions. The
  constraints on admissible transactions are described as random closed
  sets. The paper focuses on the corresponding solvency tests that
  amount to the existence of acceptable selections of the random
  sets of admissible transactions.

  \keywords{Fungibility risks \and Intragroup diversification \and
  Intragroup transactions \and Set-valued risk measures \and Risk assessments
  for groups \and Solvency tests for groups}

  \medskip
  \noindent
  JEL Classifications: G32, C65, G20
\end{abstract}

\section{Introduction}
\label{sec:introduction}

Risk-based solvency frameworks (such as Solvency~II or the Swiss
Solvency Test~(SST)) assess the financial health of insurance
companies by quantifying the capital adequacy through calculating the
solvency capital requirement. Roughly speaking, companies can use
their own economic capital models (internal models) for calculating
the available capital amounts (net assets) after one year provided the
internal model is approved by the insurance supervisor. The random
variable given by the capital amount after one year is required to be
acceptable with respect to a prescribed risk measure.

Since key market players are organized in groups, the question of
setting appropriate solvency requirements for groups becomes highly
relevant, and the quantification of risks for a group of different
legal entities (agents) is an essential aim of regulators. The
main feature of the group setting is the possibility of intragroup
transfers (IGT) that may alter financial positions of individual
agents. Section~\ref{sec:sol-groups} surveys fundamental ideas in this
relation that have been already mentioned
in~\cite{fil:kun08,fil:kup07,fil:kup08,kel07,lud07} and are
intensively discussed in the financial industry. A key question is
to what extent IGTs should be taken into account for the purpose of
risk assessment.

The choice and admissibility of IGTs may influence the risk
assessment.  These admissible transfer instruments range from the free
movement of capital between the agents (unconstrained approach) to the
case when no transfers are allowed at all (strictly granular risk
assessment). In between, we consider allowing transactions that
prohibit transfers that render the giver bankrupt or those originating
from a bankrupt agent. One might consider imposing some safety margins
or taking into account fungibility issues, see
Section~\ref{sec:restr-transf}.

Our aim is to provide a unified framework in order to explicitly
describe such transfers and the relevant risk measures. So far the
idea of using random closed sets (most importantly random cones) to
describe multiasset portfolios is well established, see
\cite{kab:saf09}. We show that a similar approach may be used to
describe the sets of admissible IGTs. The key idea of our approach is
to regard the group as acceptable if there exists an admissible
transfer that renders acceptable the individual positions of all
agents. The family of vectors representing the capital amounts added
to (or released from) each agent that make the group acceptable serves
as a risk measure for the group. This idea is similar to the risk
assessment of multiasset portfolios from \cite{ham:hey:rud11} and
\cite{cas:mol14}, while the main difference is the non-conical
dependence of the set of admissible transfer instruments from the
current capital position. As a result we end up with the new
possibility to explicitly translate realistic (non-conical) transfer
constraints in a quantitative risk management framework that assesses
genuine intragroup diversification benefits.

Section~\ref{sec:set-valu-portf} recalls major concepts related to
random closed sets, their selections, thereby following and extending
some results from \cite{cas:mol14}. Section~\ref{sec:admissible-crti}
introduces the random sets of admissible IGTs. Furthermore, it
formulates and analyses the relevant capital requirements in terms of
set-valued risk assessment of the relevant random sets of admissible
positions. Since the family of admissible IGTs typically depends to a
large extent on the level of distress a group is faced with, this
dependence creates non-conical and non-linear effects.

Section~\ref{sec:gran-cons-tests} recasts the classical granular and
consolidated approaches into our setting. We show that, under some
canonical assumptions, the granular approach can be obtained by
restricting the set of IGTs in the \emph{unconstrained solvency test},
i.e.\ in the test without any constraints on IGTs. In the coherent
case, the latter corresponds to the consolidated solvency approach,
which is frequently used in practice. Furthermore, we discuss the
unconstrained solvency test in important non-coherent cases. Various
restrictions of transfers are discussed in
Section~\ref{sec:restr-transf}. 

Section~\ref{sec:uneq-plac-agents} deals with the setting of unequally
placed companies, where some companies possess the capital of others
and so simple addition of capital amounts would lead to double
counting. It is shown that this setting can be naturally incorporated
in our set-valued framework.

Separating the financial outcome after a certain period from the
(possibly restricted) transfers between the companies (or in other
applications between portfolios) can lead to diversification effects
that are \emph{different} from the usual case of convex risk measures,
since not all of them work in the same direction, see
Section~\ref{sec:divers-effects}.

In Section~\ref{sec:num-examples} we discuss the computation of the
relevant sets, give inner and outer bounds which can be easily
implemented and discuss some numerical examples.

Our setting differs from studies of systemic risk, where the
individual acceptability of each agent does not suffice for the
acceptability of the whole financial network, see
\cite{amin:fil:min13}. While, in common with the studies of systemic
risks in \cite{fein:rud:web15}, our approach involves inverting a
set-valued risk measure, it does not rely on considering an
equilibrium in the system of agents.

\section{Existing solvency tests for groups}
\label{sec:sol-groups}

\subsection{Legal entity approaches}

A very basic but key observation for market regulators is that it is
not in the obligation of an insurance group but of individual
legal entities to pay for claims of policy holders.
This basic observation can become particularly relevant under stress.
In view of that, it has often been emphasized in the literature that
risk assessment and capital requirements for groups of companies
should take place on an individual basis, see
e.g.~\cite{fil:kun08,fil:kup07,fil:kup08,kel07,lud07}.  This means
that each legal entity is requested to set aside the capital necessary
to make its risk acceptable. This approach is often called \emph{legal
  entity approach}.

The legal entity approach appears in two basic variants:\ stand-alone
and granular. In the \emph{stand-alone} solvency assessment, the
capital amounts (net assets) of each legal entity are modeled separately and
regarded as random variables on a probability space that might differ
between the legal entities. All other group members are considered as
third parties, i.e.\ are treated in the same way as non-members of the
group.

The \emph{granular} approach aims at developing a \emph{joint} model for
all legal entities in a group. The existence of the group has
an impact on the legal entities, meaning that effects of the group on
individual entities should constitute a part of the model. These
effects result from certain IGTs and
ownership relations within the group and should hence be taken into account in the model.
Typical examples are reinsurance agreements, financial guarantees, and
intragroup loans.\footnote{For simplicity of representation we
  consider no hybrid instruments in this paper and we also do not
  enter in the important but non-mathematical debate about legal
  requirements IGTs should
  satisfy.} 
While the whole collection of capital positions
for all involved legal entities is modeled as a random vector, the
granular approach assesses each of its components separately and these
depend on the IGTs chosen by the group. In particular
note that in this \emph{existing} solvency test the capital requirement is
\emph{not} given by a \emph{single figure}.

As opposed to the stand-alone approach, the granular approach relies
on joint modeling of the capital amounts and IGTs so that the information
on other entities and the random variables describing their capital
positions flows directly and consistently\footnote{E.g.\ the terminal
  position of the legal entity usually depends on some risk factors
  (equity, interest rates, mortality etc.). In the stand-alone case
  the risk factor models do not need to be the same.  In the granular
  case the terminal positions are modeled based on the same risk
  factors.} into the modeling of any particular component. In other
words, the granular approach relies on the modeling of each particular
component on a richer probability space.
This may lead to \emph{different marginal distributions} between the
stand-alone and the granular approach. Hence,
it seems to be almost impossible to suitably model on a stand-alone
basis the particularly dangerous situation where several group members
simultaneously run into problems.

Mathematically speaking, consider $d$ legal entities whose terminal
capital positions are described by the vector
$C=(C_1,\dots,C_d)$. Since the main point of this paper is to separate
the financial positions after the relevant time period from the set of
admissible IGTs we will later assume that $C$ is the capital position
of all agents \emph{without} any IGTs.

It is assumed that the risks of the legal entities are evaluated using
monetary risk measures, namely the risk measure $\risk_i$ for the
$i$th legal entity. The individual risks of each legal entity after
IGTs build the vector
\begin{displaymath}
  \vecrisk(\tilde C)=(\risk_1(\tilde C_1),\dots,\risk_d(\tilde C_d))\,,
\end{displaymath}
where the position $\tilde C$ represents the capital positions after
IGTs. It is regarded acceptable if $\risk_i(\tilde C_i)\leq 0$ for
all $i$, meaning that $\vecrisk(\tilde C)\leq 0$, where the inequality
between vectors is understood coordinatewisely and so means that
$\vecrisk(\tilde C)$ has all non-positive components.

\subsection{Consolidated approach vs.\  granular with fixed IGTs}
\label{sec:consolidated-fungibility}

Intragroup transactions can be used for increasing the
``diversification'' of risks within the group, and so they \emph{can}
be in the interest of all policy holders, see e.g.~\cite{fil:kup08},
if they are applied in \emph{reality}. In particular, IGTs may be
used to offset the risks of some, e.g.\ poorly performing, legal
entities, while not necessarily immediately diminishing the disposable
assets of other legal entities.

Since the IGTs can reduce the solvency capital requirements, it is
important that they are not of a purely hypothetical nature. Thus,
they have to be \emph{realistic} in situations when they are
needed. The family of feasible transfers depends on whether or not the
group is in a stressed situation, i.e.\ they are particularly exposed
to \emph{fungibility risks}, as is well-known in
practice. Furthermore, note that not all policy holders (of different
legal entities within a group) have the same interests.  Hence, IGTs
have to be ``sufficiently balanced'' with respect to the interests of
all \emph{policy holders}.
It is important to note that IGTs have a general potential for being
misused. Thus, even if transfers are based on legally binding and
enforceable contracts, it is still possible that the corresponding
transfers are not realistic, not sufficiently in line with the
interests of some policy holders, etc.,
so that other/further restrictions may have to be taken into account.
The decision about the restrictions, which are finally taken into
account in a concrete solvency framework is a political and legal but
not mathematical question, therefore it is not addressed in this
paper. However, what we provide here is a tool to translate
restrictions into a quantitative risk management framework.

It becomes increasingly popular in practice to regard the group as
acceptable if the random variable
\begin{equation}
  \label{eq:def-D}
   D=\sum_{j=1}^dC_j
\end{equation}
is acceptable with respect to a prescribed risk measure.\footnote{For
  simplicity of the representation we ignore here subtleties regarding
  the so-called (a)symmetric valuation. For a discussion of
  participation in subsidiaries we refer to
  Section~\ref{sec:uneq-plac-agents}.}  This approach is called the
\emph{consolidated solvency test}.  It is a frequently discussed
interpretation that the classical consolidated approach
\emph{implicitly} assumes full fungibility of capital between all
different legal entities of the group, see e.g.~\cite{kel07,lud07}.

Filipovi{\'c} and Kupper \cite{fil:kup07,fil:kup08} and Filipovi{\'c}
and Kunz \cite{fil:kun08} made an important step towards understanding
intra group diversification and quantifying regulatory schemes being
sandwiched between the granular approach without any admissible IGTs,
in the following called \emph{strictly granular}, and the consolidated
solvency tests. Concrete examples are based on dividend payments and
reinsurance contracts and are described by some random transfer
instruments (linearly independent random variables) $Z_0,\dots,Z_n$,
so that the terminal risk profile of each agent is given by
\begin{displaymath}
  C_i+\sum_{j=0}^n x_i^j Z_j\,,\quad i=1,\dots,d\,,
\end{displaymath}
where the transferred amounts $x_0,\dots,x_n$ belong to a specified
(feasible) subset of $\R^{n+1}$ and satisfy the clearing condition
\begin{displaymath}
  \sum_{i=1}^d\sum_{j=0}^n x_i^j Z_j\leq 0 \quad \text{a.s.}
\end{displaymath}
It is assumed in~\cite{fil:kup08} that the group aims to minimise
the aggregate required group capital
\begin{displaymath}
  \sum_{i=1}^d \risk_i\Big(C_i+\sum_{j=0}^n x_i^j Z_j\Big)\,,
\end{displaymath}
subject to the feasibility and clearing conditions on the weights
$x_i^j$.

Apart from the mentioned independency assumption, \cite{fil:kup08}
assume a single currency setting, absence of transaction costs, that
the weights are deterministic numbers and that the admissible
transfers are restricted to be a linear combination of fixed 
instruments without explicit fungibility constraints that depends on
the level of distress a group is exposed to.

Fungibility constraints on possible dividend payments are considered
in the very concrete bottom-up group diversification analysis
presented in~\cite{fil:kun08}.  The natural link between admissible
IGTs and the minimization problem of the total required capital of
the group, which is emphasized in~\cite{fil:kup08}, relates this work
to the extensive literature on \emph{optimal risk sharing}, see
e.g.~\cite{acc07,acc:svi09,Asim:Bad:Tsan13,bar:elk05,bar:sca08,ger79,rav:svi14}
and the literature cited therein. For optimization under restriction
to certain so-called cash invariant sets we refer to~\cite{fil:svi08},
partially based on~\cite{fil:kup08}, for portfolios of risk vectors
(including the influence of dependence on the risk of a portfolio)
see~\cite{kis:rue10}.

In the interest of brevity, the main focus of this paper concerning
the granular approach is on the task to directly include concrete
admissibility constraints for IGTs in a solvency framework without
restricting the analysis to concretely given contracts, and we leave to
agents the task of choosing specific IGTs from the family of
admissible ones.

From a practical perspective, groups may not necessarily aim to
minimize the total capital requirement, since some legal entities may
have different placements or roles within the group (like subsidiaries
and head offices), some may be reluctant to commit own capital in
order to compensate the losses of other legal entities or may do this
only given certain conditions (that are random), etc. In view of this,
the total required capital for the group may not serve as a right
utility for the group, or at least the agents might seek to minimize
the total required capital only under additional constraints that do not
seem to be reflected in the literature so far.

\subsection{Consolidated test with fungibility constraints}

The consolidated solvency approach is based on the acceptability of
the random variable in~(\ref{eq:def-D}), where the group is simply
considered as if it were one ``legal entity''. It is the default case
in Solvency~II.  Since July 2015 the default case of the SST is also
based on this approach. However the regulator may impose additional
requirements concerning availability and fungibility of capital within
the group and also impose capital add-ons in case of seriously
restricted fungibility that is not reflected in the models.
In view of that, it is essential to derive a proper and mathematically
well-founded way of describing transfer possibilities within the group
and to find and to derive a framework which allows to
\emph{adequately} include fungibility constraints into the
consolidated framework.

By translating the consolidated test into the unconstrained solvency
test in the subadditive case, we show that no fungibility constraints
are taken into account in the risk-measurement.
We also suggest a framework that represents fungibility constraints in
a solvency test via simply and explicitly restricting the set of
admissible transfers.

\section{Set-valued portfolios and set-valued risks}
\label{sec:set-valu-portf}

A set $A\subset \R^d$ is \emph{lower} if $y\leq x$ (coordinatewisely)
for $x\in A$ implies that $y\in A$. The family of upper sets coincides
with the family of reflected lower sets, i.e.\ $A$ is an upper set if
and only if $\{-x:\; x\in A\}$ is a lower set. The
topological closure of $A\subset\R^d$ is denoted by $\cl A$ and its
boundary by $\partial A$. Denote $A+a=\{x+a:\; x\in A\}$.

Fix a complete probability space $(\Omega,\salg,\P)$.  Let $\X$ be a
lower \emph{random closed set} in $\R^d$, i.e.\ $\X$ is a random
element taking values in the family of lower closed sets in $\R^d$.
The measurability requirement on $\X$ is understood as $\{\omega:\;
\X(\omega)\cap K\neq\emptyset\}\in\salg$ for all compact sets $K$ in
$\R^d$, see \cite{mo1}.  The set $\X$ is called a \emph{set-valued
  portfolio} in \cite{cas:mol14}. In our setting, the points of $\X$
describe the terminal capitals of $d$ companies of a group after all
admissible IGTs. In many cases, $\X$ is almost surely convex, meaning
that almost all its realizations are convex sets.

Let $\Lp(\R^d)$ be the family of $p$-integrable random vectors in
$\R^d$, where $p\in\{0\}\cup[1,\infty]$ is fixed. The choice $p=0$
yields the family of all random vectors.  Denote by $\|\xi\|_p$ the
$\Lp$-norm of $\xi$ for $p\geq1$.  A random vector $\xi$ in $\R^d$ is
said to be a \emph{selection} of $\X$ if $\xi\in\X$ almost
surely. Such a random vector may be viewed as a particular terminal
position achieved after a certain IGT. We assume throughout that $\X$
is \emph{$p$-integrable}, i.e.\ $\X$ possesses at least one
$p$-integrable selection. In other words, the family $\Lp(\X)$ of all
$p$-integrable selections of $\X$ is not empty.

We identify $\X$ with the family $\Lp(\X)$ of all its $p$-integrable
selections. This is justified by the fact that, for almost all
$\omega\in\Omega$, $\X(\omega)$ is the closure of
$\{\xi_n(\omega),n\geq1\}$ for a sequence $\{\xi_n,n\geq1\}\subset
\Lp(\X)$, see \cite[Prop.~2.1.2]{mo1}.

In the following $\vecrisk=(\risk_1,\dots,\risk_d)$ denotes the vector
composed of \emph{monetary risk measures} with \emph{finite} values
defined on the space $\Lp(\R)$ for $p\in\{0\}\cup[1,\infty]$ (called
$\Lp$-risk measures).  Canonical examples of such risk measures are
the Value-at-Risk for all $p$, the Average Value-at-Risk for $p\geq 1$
or the entropic risk measure for $p=\infty$.

Being finite $\Lp$-risk measures, the components of $\vecrisk$ are
Lipschitz in the $\Lp$-norm, see \cite{kain:rues09}. In particular,
they are strongly continuous. If $p=\infty$, additionally assume that
the components of $\vecrisk$ satisfy the \emph{Fatou property} that
corresponds to the weak-star lower semicontinuity (that is with
respect to the bounded a.s.\ convergence).

Furthermore, $\vecrisk$ is said to be \emph{coherent}
(resp.\ convex) if all its components are coherent (resp. convex) risk
measures, and in this case we let $p\in[1,\infty]$, see
\cite{kain:rues09}.  The coherency or convexity assumptions are
explicitly imposed whenever needed.
\begin{quote}
  All convex (and coherent) risk measures are tacitly assumed to be
  \emph{law invariant} and defined on a \emph{non-atomic} probability
  space.
\end{quote}

\begin{definition}
  A set-valued portfolio $\X$ is said to be \emph{acceptable}, if it
  possesses a $p$-integrable selection with all individually
  acceptable marginals, i.e.\ there exists $\xi\in \Lp(\X)$ such that
  $\vecrisk(\xi)\leq 0$. Then $\xi$ is called an \emph{acceptable
    selection} of $\X$.
\end{definition}

\begin{definition}
  \label{def:rhos}
  The \emph{selection risk measure} $\rhos(\X)$ is the closure of the
  set
  \begin{displaymath}
    \rhonot(\X)=\{a\in\R^d:\; \X+a\; \text{is acceptable}\}\,.
  \end{displaymath}
\end{definition}

Equivalently, the selection risk measure can be defined as
\begin{displaymath}
  \rhos(\X)=\mathrm{cl}\bigcup_{\xi\in \Lp(\X)}(\vecrisk(\xi)+\R_+^d).
\end{displaymath}
In \cite{ham:rud:yan13}, $\rhos(\xi+\K)$ for $\xi\in\Lp(\R^d)$ and a
cone $\K$ is called a market extension of the regulator risk measure
$\vecrisk$.

\begin{example}
  On the line all lower sets are half-lines, so that each set-valued
  portfolio in $\R^1$ is given by $\X=(-\infty,\eta]$ for a random
  variable $\eta$. Then $\X$ is acceptable if and only if $\eta$ is
  acceptable. Working with the half-line $\X$ instead of $\eta$ does
  not alter financial realities, while being a useful tool in
  higher-dimensional situations.

  If $\X=\xi+\R_-^d$ for $p$-integrable random vector $\xi$, then
  $\rhos(\X)=\vecrisk(\xi)+\R_+^d$.
\end{example}

\begin{example}
  \label{ex:half-space}
  Let $\X=\{x=(x_1,\dots,x_d)\in\R^d:\; x_1+\cdots+x_d\leq \eta\}$ for
  a random variable $\eta$. The family $\Lp[\infty](\partial \X)$ of
  all essentially bounded selections of the boundary of $\X$ was
  studied in detail in \cite{jouin:sch:touz08} for $d=2$ and is called
  the set of attainable allocations, see also \cite{ger79}.
\end{example}

In order to find acceptable selections of $\X$, it is sensible to look
only at those points of $\X$ that are not coordinatewisely dominated
by any other selection of $\X$. These points build a subset of
$\partial\X$ denoted by $\partial^+\X$ and are called \emph{Pareto
  optimal} points of $\X$.

\begin{lemma}
  \label{lemma:pareto-convex-colsed}
  If $\X$ is convex, then $\partial^+\X$ is a random closed set.
\end{lemma}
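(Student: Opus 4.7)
The plan is to express $\partial^+\X(\omega)$ through a countable collection of measurable conditions on $\X(\omega)$, and then invoke the fundamental measurability theorem to deduce Effros measurability. Convexity enters both when reducing the Pareto condition to finitely many basis directions and when arguing closedness of the values.

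The key step is the pointwise characterization: for any convex closed lower set $A\subseteq\R^d$,
\[
x\in\partial^+A\quad\Longleftrightarrow\quad x\in A\text{ and }x+qe_i\notin A\text{ for every }i\in\{1,\dots,d\}\text{ and every }q\in\mathbb{Q}\cap(0,\infty),
\]
with $e_1,\dots,e_d$ the standard basis of $\R^d$. The implication ``$\Rightarrow$'' is immediate since $x+qe_i$ coordinatewise dominates $x$. For ``$\Leftarrow$'', suppose $y\in A$ satisfies $y\ge x$, $y\ne x$, with $y_i>x_i$ for some $i$. Convexity places the segment $\{x+t(y-x):t\in[0,1]\}$ in $A$, and by the lower property $x+t(y_i-x_i)e_i\le x+t(y-x)$ also lies in $A$ for every $t\in(0,1]$. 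Choosing $t\in(0,1]$ with $t(y_i-x_i)$ rational contradicts the right-hand side.

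Applied pointwise to $\X(\omega)$, the characterization rewrites the graph of $\partial^+\X$ in $\Omega\times\R^d$ as
\[
\mathrm{Gr}(\partial^+\X)=\mathrm{Gr}(\X)\cap\bigcap_{i=1}^d\bigcap_{q\in\mathbb{Q}\cap(0,\infty)}\bigl\{(\omega,x):(\omega,x+qe_i)\notin\mathrm{Gr}(\X)\bigr\}.
\]
Since $\X$ is a random closed set, $\mathrm{Gr}(\X)\in\salg\otimes\sB(\R^d)$, and the continuous translation $(\omega,x)\mapsto(\omega,x+qe_i)$ preserves this measurability, so each of the countably many sets in the intersection lies in $\salg\otimes\sB(\R^d)$, and hence so does $\mathrm{Gr}(\partial^+\X)$. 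Completeness of $(\Omega,\salg,\P)$ together with the fundamental measurability theorem (see \cite{mo1}) then upgrades measurability of the graph to Effros measurability of $\partial^+\X$ as a closed-valued multifunction, provided the values are closed.

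The main obstacle I anticipate is verifying closedness of $\partial^+\X(\omega)$, since for a general lower closed set the Pareto frontier may fail to be closed and the convexity hypothesis has genuine content here. Using the same characterization, $\X(\omega)\setminus\partial^+\X(\omega)=\bigcup_{i=1}^d\bigcup_{q>0}(\X(\omega)-qe_i)$, and I plan to show this union is relatively open in $\X(\omega)$ by the following convexity-plus-lower argument: given $z\in\X(\omega)$ with $z+\epsilon_0e_i\in\X(\omega)$ and $y\in\X(\omega)$ close to $z$, the point $(y\wedge z)+(\epsilon_0/2)e_i$ is coordinatewise dominated by $z+\epsilon_0e_i$ and hence lies in $\X(\omega)$ by the lower property; taking a convex combination of this point with $y\in\X(\omega)$ and reapplying the lower property produces $y+\delta e_i\in\X(\omega)$ for a suitable $\delta>0$. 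Handling the coordinates where $y$ overshoots $z$ is the delicate part of this step; once settled, relative openness forces closedness of $\partial^+\X(\omega)$ and completes the proof.
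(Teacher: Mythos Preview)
Your measurability argument is correct and runs parallel to the paper's: both express the graph of $\partial^+\X$ as a countable intersection of $\salg\otimes\sB(\R^d)$-measurable sets and then invoke the fundamental measurability theorem. Your characterization via the basis directions $qe_i$ is in fact cleaner than the paper's version, which intersects over all $q\in\mathbb{Q}_+^d$; note also that your ``$\Leftarrow$'' step does not actually require convexity, since for $y\geq x$ with $y_i>x_i$ one has $x+qe_i\leq y$ directly for every $q\in(0,y_i-x_i)$ and the lower property alone places $x+qe_i$ in $A$.

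The closedness argument, however, has a genuine gap that you yourself flag but do not close. Your plan is to show that $\X(\omega)\setminus\partial^+\X(\omega)$ is relatively open by taking $z$ with $z+\epsilon_0e_i\in\X(\omega)$ and producing, for nearby $y\in\X(\omega)$, a point of $\X(\omega)$ strictly dominating $y$. Your construction via $(y\wedge z)+(\epsilon_0/2)e_i$ and convex combinations with $y$ succeeds only when $y_j\leq z_j$ for all $j\neq i$; when $y$ overshoots $z$ in some coordinate $j\neq i$, the convex combination drops strictly below $y_j$ in that coordinate, and neither a further convex combination nor the lower property repairs this. You call this ``the delicate part'' and leave it open, but it is not a routine detail: there is no evident way to force domination of $y$ in the overshooting coordinates using only the single direction $e_i$, and shifting to another direction $e_j$ would require $z+\epsilon e_j\in\X(\omega)$, which is not assumed.

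The paper proves closedness by an entirely different route: a direct sequential argument. Given $x_n\in\partial^+\X$ with $x_n\to x$ and assuming $x\notin\partial^+\X$, one passes to a subsequence along which each coordinate of $x_n$ is monotone, picks a Pareto point $y\geq x$ whose set $S=\{i:y_i>x_i\}$ of strict-improvement coordinates is \emph{maximal}, and then derives a contradiction by showing either that $y$ eventually dominates $x_n$, or that a suitable convex combination $\lambda y+(1-\lambda)x_n$ strictly improves on $x$ in a coordinate outside $S$, violating maximality. The maximality device is precisely what handles the overshooting phenomenon that stalls your argument. If you want to salvage your relative-openness approach, you will need an idea of comparable strength; otherwise the paper's sequential argument is the cleaner path.
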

\begin{proof}
  Assume $x_n=(x_n^{(1)},\dots,x_n^{(d)})\in \partial^+\X$ is a
  sequence converging to $x$. By choosing subsequences, we can assume
  that all components converge monotonically. Let $T$ be the set of
  components converging strongly decreasing.  Assume $x
  \notin \partial^+\X$. Since $\X$ is closed there exists
  $y\in\partial^+\X$ such that $x\leq y$ and $y^{(i)}>x^{(i)}$ for
  some $i$. Choose $y$ such that the set $S$ of all indices $i$ for
  which this inequality holds is maximal. If $T\subseteq S$, then for
  sufficiently large $n$, $y$ dominates $x_n$, contradicting the
  Pareto optimality of $x_n$.  Assume that $j\in T\backslash S$. By
  convexity, $\tilde{y}_n=\lambda y + (1-\lambda) x_n\in \X$ for
  $\lambda\in[0,1]$. By taking $\lambda$ sufficiently close to $1$, we
  can achieve that $\tilde{y}_n^{(i)}>x^{(i)}$ for $i\in S$ and for
  sufficiently large $n$. Due to the strict monotonicity of
  $x_n^{(j)}$, we also have $\tilde{y}_n^{(j)}>x^{(j)}$, contradicting
  the maximality of $S$. Thus, $\partial^+\X$ is closed.

  For the measurability of $\partial^+\X$, it suffices to check that
  $\Gamma=\{(\omega,x):\; x\in \partial^+\X(\omega)\}$,
  i.e.\ the graph of $\partial^+\X$, is a measurable set in the
  $\sigma$-algebra $\salg\otimes\sB(\R^d)$, where $\sB(\R^d)$ is the
  Borel $\sigma$-algebra in $\R^d$. Indeed,
  \begin{displaymath}
    \Gamma=\bigcap_{q\in{\mathbb Q}_+^d}
    \{(\omega,x):\; x\in\X(\omega), x+q\notin\X\}\,,
  \end{displaymath}
  where ${\mathbb Q}_+$ is the family of positive rational
  numbers. This is justified, since a convex lower set is necessarily
  regular closed, i.e.\ coincides with the closure of its
  interior.
\end{proof}

\begin{lemma}
  \label{lemma:pareto-x}
  A convex set-valued portfolio $\X$ admits an acceptable selection if
  and only if $\partial^+\X$ admits an acceptable selection.
\end{lemma}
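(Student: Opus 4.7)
The \emph{if} direction is immediate: since $\partial^+\X \subset \X$ almost surely, any acceptable $p$-integrable selection of $\partial^+\X$ is \emph{a fortiori} an acceptable selection of $\X$.

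For the \emph{only if} direction, I would fix an acceptable $\xi \in \Lp(\X)$, so $\vecrisk(\xi)\leq 0$. The components of $\vecrisk$ are monetary (hence monotone) risk measures, so any $\eta$ with $\eta\geq\xi$ a.s.\ componentwise automatically satisfies $\vecrisk(\eta)\leq\vecrisk(\xi)\leq 0$. It therefore suffices to produce $\eta\in\Lp(\partial^+\X)$ with $\eta\geq\xi$ a.s. I would obtain such $\eta$ as a measurable selection of the random set $\partial^+\X\cap(\xi+\R_+^d)$: measurability of this intersection follows from Lemma~\ref{lemma:pareto-convex-colsed} (which gives that $\partial^+\X$ is a random closed set) together with the measurability of the translated orthant $\xi+\R_+^d$, and a standard measurable selection theorem (see \cite{mo1}) then supplies $\eta$.

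The key pointwise step is to verify that $\partial^+\X(\omega)\cap(\xi(\omega)+\R_+^d)$ is non-empty for almost every $\omega$. I would argue this by a Zorn-type maximality argument applied to the non-empty closed convex set $A(\omega):=\X(\omega)\cap(\xi(\omega)+\R_+^d)$ equipped with the coordinatewise order: any chain can be reduced to a monotone sequence whose coordinatewise supremum (when finite) lies in $A(\omega)$ by closedness, so $A(\omega)$ admits a $\leq$-maximal element $\eta(\omega)$. Since $\X(\omega)$ is a lower set, any $y\in\X(\omega)$ dominating $\eta(\omega)$ automatically dominates $\xi(\omega)$ and hence lies in $A(\omega)$; maximality of $\eta(\omega)$ in $A(\omega)$ therefore coincides with Pareto optimality in $\X(\omega)$, placing $\eta(\omega)\in\partial^+\X(\omega)$. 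The resulting $\eta$ inherits $p$-integrability from $\xi$ via $\eta^-\leq\xi^-$, with the upper tail controlled by the Pareto constraint (otherwise one could truncate inside $A(\omega)$ without leaving the Pareto frontier).

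The main obstacle I anticipate is the existence step in the Zorn argument, since convex closed lower sets in $\R^d$ need not admit Pareto-optimal points at all (for instance $\X=\R^d$); the argument tacitly leans on convexity of $\X$, on the anchoring at $\xi$, and on the implicit non-degeneracy that $\X$ does not recede indefinitely in the positive orthant. Once non-emptiness of $\partial^+\X\cap(\xi+\R_+^d)$ is secured, measurability and preservation of $p$-integrability are routine.
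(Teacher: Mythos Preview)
Your approach mirrors the paper's: form $\Y=\X\cap(\xi+\R_+^d)$, note that all its selections are acceptable by monotonicity of the components of $\vecrisk$, intersect with $\partial^+\X$, and apply a measurable selection theorem. The paper's proof is terser---it simply asserts that $\Y\cap\partial^+\X$ is almost surely non-empty and has a measurable graph---whereas you supply a Zorn-type argument and, more importantly, correctly flag the genuine obstruction: a convex closed lower set may recede in the positive direction (e.g.\ $\X=\R^d$, for which $\partial^+\X=\emptyset$ and the lemma fails as stated). Your concern about $p$-integrability of the resulting selection is likewise legitimate and is not addressed in the paper's proof either; both arguments implicitly rely on some upper control of $\partial^+\X$, which the paper only imposes explicitly later (cf.\ the $p$-integrable boundedness hypothesis in Proposition~\ref{prop:closed}).
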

\begin{proof}
  Assume that $\X$ admits an acceptable selection $\xi$. If $\xi$ is
  not Pareto optimal, consider the random closed set
  $\Y=\X\cap(\xi+\R_+^d)$. All selections of $\Y$ are acceptable,
  $\Y\cap\partial^+\X$ is almost surely non-empty and has a measurable
  graph. By the measurable selection theorem
  \cite[Th.~5.4.1]{kab:saf09}, $\Y\cap \partial^+\X$ admits a
  measurable selection that is automatically acceptable.
\end{proof}

The scaling transformation of a set-valued portfolio is defined as
$t\X=\{tx:\; x\in\X\}$.  The sum of set-valued portfolios $\X+\Y$ is
the set-valued portfolio being the closure of all sums of selections
of $\X$ and $\Y$. It is known \cite{mo1} that such operations respect
the measurability property, i.e. $t\X$ and $\X+\Y$ are random closed
sets.

The following result is proved in \cite{cas:mol14} for $\vecrisk$
composed of coherent risk measures, while obvious changes lead to its
version for general monetary risk measures.

\begin{theorem}
  \label{thr:selection-risk-monetary}
  The selection risk measure takes values being upper closed sets, and
  also
  \begin{itemize}
  \item[(i)] $\rhos(\X+a)=\rhos(\X)-a$ for all deterministic
    $a\in\R^d$ (cash invariance);
  \item[(ii)] If $\X\subset\Y$ a.s., then $\rhos(\X)\subset\rhos(\Y)$
    (monotonicity).
  \end{itemize}
  If $\vecrisk$ is convex and $\X,\Y$ are almost surely convex
  set-valued portfolios, then $\rhos(\X)$ takes convex values, is law
  invariant, and
  \begin{itemize}
  \item[(iii)] $\rhos(\lambda\X+(1-\lambda)\Y)\supset
    \lambda\rhos(\X)+(1-\lambda)\rhos(\Y)$ for all deterministic
    $\lambda\in[0,1]$ (convexity).
  \end{itemize}
  If, additionally, the components of $\vecrisk$ are all homogeneous
  (i.e.\ $\vecrisk$ is coherent), then
  \begin{itemize}
  \item[(iv)] $\rhos(t\X)=t\rhos(\X)$ for all $t>0$ (homogeneity);
  \item[(v)] $\rhos(\X+\Y)\supset \rhos(\X)+\rhos(\Y)$,
  \end{itemize}
  meaning that $\rhos$ is a set-valued coherent risk measure, see
  \cite{ham:hey10,ham:hey:rud11}.
\end{theorem}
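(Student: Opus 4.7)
The plan is to work directly from the equivalent formula
\begin{displaymath}
  \rhos(\X)=\cl\bigcup_{\xi\in\Lp(\X)}\bigl(\vecrisk(\xi)+\R_+^d\bigr),
\end{displaymath}
and push the corresponding properties of the scalar components $\risk_i$ through the union and the closure. This immediately yields the upper-closed-valued claim: each summand $\vecrisk(\xi)+\R_+^d$ is upper since $\R_+^d$ is upper, arbitrary unions of upper sets are upper, and the topological closure of an upper set in $\R^d$ remains upper. The remaining properties fall into two groups: the ``trivial'' ones (i), (ii), (iv), (v), obtained by translating scalar identities for the $\risk_i$ at the level of selections, and the more delicate convexity, convex-valuedness and law-invariance statements.

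For the trivial properties, cash invariance follows from $\Lp(\X+a)=\Lp(\X)+a$ combined with the componentwise identity $\risk_i(\xi_i+a_i)=\risk_i(\xi_i)-a_i$, which translates the defining union by $-a$. Monotonicity follows because $\X\subset\Y$ implies $\Lp(\X)\subset\Lp(\Y)$, enlarging the union. In the coherent case, positive homogeneity of each $\risk_i$ and $\Lp(t\X)=t\Lp(\X)$ give $\vecrisk(t\xi)+\R_+^d=t(\vecrisk(\xi)+\R_+^d)$ because $t\R_+^d=\R_+^d$ for $t>0$. For subadditivity, given $a\in\rhonot(\X)$ and $b\in\rhonot(\Y)$, pick selections $\xi\in\Lp(\X)$ and $\eta\in\Lp(\Y)$ with $\vecrisk(\xi)\leq a$ and $\vecrisk(\eta)\leq b$; then $\xi+\eta\in\Lp(\X+\Y)$ and componentwise subadditivity yields $\vecrisk(\xi+\eta)\leq a+b$, so $a+b\in\rhonot(\X+\Y)$. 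Each statement passes from $\rhonot$ to $\rhos$ by picking approximating sequences $a_n\to a$, $b_m\to b$ inside the unions and using continuity of vector addition and scaling.

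The convexity statement (iii) is analogous. For $a\in\rhonot(\X)$ and $b\in\rhonot(\Y)$ realised by selections $\xi,\eta$, almost sure convexity of $\X$ and $\Y$ makes $\lambda\xi+(1-\lambda)\eta$ a selection of $\lambda\X+(1-\lambda)\Y$, and componentwise convexity of each $\risk_i$ yields $\vecrisk(\lambda\xi+(1-\lambda)\eta)\leq\lambda a+(1-\lambda)b$, giving (iii) up to closure. Convex-valuedness of $\rhos(\X)$ is then the special case $\Y=\X$ combined with $\lambda\X+(1-\lambda)\X=\X$.

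The genuine obstacle is the law invariance of $\rhos(\X)$ on the non-atomic probability space. If $\X$ and $\X'$ have the same distribution as random closed sets, I would produce, for each $\xi\in\Lp(\X)$, a selection $\xi'\in\Lp(\X')$ with the same joint law as $\xi$ (in particular with the same marginal laws), by the standard measurable re-randomisation argument that exploits non-atomicity to transfer the selection across an equidistributed copy of the graph. Law invariance of each $\risk_i$ then gives $\vecrisk(\xi)=\vecrisk(\xi')$, so $\rhonot(\X)\subset\rhonot(\X')$; symmetry and closure yield $\rhos(\X)=\rhos(\X')$. This transfer step is the heart of the argument and is exactly the point where the Fatou/weak-star semicontinuity for $p=\infty$ and the non-atomicity assumption are used; it is the analogue, in the selection setting, of the classical scalar law-invariance lemma and is the piece that \cite{cas:mol14} carries out in the coherent case, with the obvious changes giving the monetary version stated here.
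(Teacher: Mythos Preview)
The paper does not actually give a proof of this theorem: it states that the result ``is proved in \cite{cas:mol14} for $\vecrisk$ composed of coherent risk measures, while obvious changes lead to its version for general monetary risk measures.'' So there is nothing to compare against beyond that citation, and your proposal supplies precisely the kind of detail the paper omits. The overall strategy --- push the scalar properties of the $\risk_i$ through the union $\bigcup_{\xi}(\vecrisk(\xi)+\R_+^d)$ and then through the closure --- is exactly the approach of \cite{cas:mol14}, and your treatment of (i), (ii), (iv), (v), convex-valuedness and (iii) is correct.

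Two minor inaccuracies are worth flagging. First, in your argument for (iii) you write that ``almost sure convexity of $\X$ and $\Y$ makes $\lambda\xi+(1-\lambda)\eta$ a selection of $\lambda\X+(1-\lambda)\Y$''; in fact this membership is automatic from the definition of the Minkowski combination and needs no convexity. The convexity of $\X$ is only used where you do use it, namely to get $\lambda\X+(1-\lambda)\X=\X$ for the convex-valuedness claim. Second, the Fatou property is not what drives the law-invariance step: non-atomicity is what allows the equidistributed-selection transfer between $\X$ and $\X'$, while the Fatou property is invoked elsewhere in the paper (for the closedness argument in Proposition~\ref{prop:closed}). Neither point affects the validity of your sketch.
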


The set $\partial^+\X$ is said to be \emph{$p$-integrably bounded} if
\begin{displaymath}
  \|\partial^+\X\|=\sup\{\|x\|:\; x\in\partial^+\X\}\in \Lp(\R).
\end{displaymath}
If $p=\infty$, this is the case if and only if $\partial^+\X$ is
almost surely a subset of a deterministic bounded set.  For the sake
of completeness we provide a proof of the closedness of $\rhonot(\X)$
for all $p$ that does not use the coherency assumption as in
\cite[Th.~3.6]{cas:mol14}.

\begin{proposition}
  \label{prop:closed}
  If $\partial^+\X$ is $p$-integrably bounded with $p\in[1,\infty]$
  and $\vecrisk$ is a convex $\Lp$-risk measure, then $\rhonot(\X)$ is
  a closed set.
\end{proposition}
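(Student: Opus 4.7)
The plan is to verify sequential closedness of $\rhonot(\X)$: given $a_n \in \rhonot(\X)$ with $a_n \to a$, produce a selection $\xi_\infty \in \Lp(\X)$ with $\vecrisk(\xi_\infty) \leq a$, which witnesses $a \in \rhonot(\X)$. For each $n$, the definition of $\rhonot$ together with the componentwise cash invariance of $\vecrisk$ yields some $\xi_n \in \Lp(\X)$ with $\vecrisk(\xi_n) \leq a_n$. Using the argument of Lemma~\ref{lemma:pareto-x}, I replace $\xi_n$ by a Pareto optimal selection of $\X$ that dominates it (such a selection stays acceptable since each $\risk_i$ is monotone), so without loss of generality $\xi_n \in \Lp(\partial^+\X)$. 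The $p$-integrable boundedness assumption then provides the common envelope $|\xi_n| \leq \|\partial^+\X\|$ a.s., with $\|\partial^+\X\| \in \Lp(\R_+)$.

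With this $L^1$-bounded family in hand, Komlós' theorem (applied componentwise) yields a subsequence $\{\xi_{n_k}\}$ whose Cesaro averages $\tilde\xi_N := N^{-1}\sum_{k=1}^N \xi_{n_k}$ converge almost surely to some random vector $\xi_\infty$. Almost-sure convexity of $\X$ keeps each $\tilde\xi_N$ inside $\X$, and closedness of $\X$ forces $\xi_\infty \in \X$ almost surely; since $|\tilde\xi_N| \leq \|\partial^+\X\|$ pointwise, the same holds for $\xi_\infty$, so $\xi_\infty \in \Lp(\X)$. Componentwise convexity of $\vecrisk$ then gives
\begin{displaymath}
  \vecrisk(\tilde\xi_N) \leq \frac{1}{N}\sum_{k=1}^N \vecrisk(\xi_{n_k}) \leq \frac{1}{N}\sum_{k=1}^N a_{n_k},
\end{displaymath}
whose right-hand side tends to $a$ by the standard Cesaro argument.

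The conclusion $\vecrisk(\xi_\infty) \leq a$ then follows by passing to the limit on the left-hand side: for $p<\infty$ the envelope lets dominated convergence give $\tilde\xi_N \to \xi_\infty$ in $\Lp$, so the Lipschitz continuity of finite convex $\Lp$-risk measures yields $\vecrisk(\tilde\xi_N) \to \vecrisk(\xi_\infty)$; for $p=\infty$ the convergence is uniformly bounded and almost sure, so the Fatou property (standing assumption in that case) gives $\vecrisk(\xi_\infty) \leq \liminf_N \vecrisk(\tilde\xi_N)$. In either regime $\xi_\infty + a \in \Lp(\X+a)$ is an acceptable selection, showing $a\in\rhonot(\X)$. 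The main obstacle is extracting almost-sure convergence from a merely $\Lp$-bounded sequence of selections: weak compactness alone would not preserve pointwise membership in $\X$, so a Komlós-type device is essential, and almost-sure convexity of $\X$ is exactly what makes the Cesaro averages admissible selections. The case $p=\infty$ is slightly delicate because norm continuity of $\vecrisk$ is unavailable and the Fatou property has to stand in for it.
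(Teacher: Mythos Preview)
Your proof is correct and follows essentially the same route as the paper's: reduce to Pareto-optimal selections via Lemma~\ref{lemma:pareto-x}, use Koml\'os to extract a.s.\ convergent Ces\`aro averages, control their risks by convexity of $\vecrisk$, and pass to the limit using $\Lp$-Lipschitz continuity for $p<\infty$ and the Fatou property for $p=\infty$. Your write-up is in fact a bit more explicit than the paper's about why the limit $\xi_\infty$ lies in $\Lp(\X)$ (via a.s.\ convexity and closedness of $\X$), which the paper leaves implicit.
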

\begin{proof}
  Let $x_n\in\rhonot(\X)$ and $x_n\to x$.  By
  Lemma~\ref{lemma:pareto-x}, there exists $\xi_n\in
  \Lp(\partial^+\X)$ such that $\vecrisk(\xi_n)\leq x_n$. 

  Assume first that $p\in[1,\infty)$. Since $\partial^+\X$ is
  $p$-integrably bounded, all its selections have uniformly bounded
  $\Lp[1]$-norms. By the Koml\'os theorem, see
  e.g. \cite[Th.~5.2.1]{kab:saf09}, and passing to a subsequence,
  $\bar\xi_n=n^{-1}(\xi_1+\cdots+\xi_n)$ converges a.s. to $\xi$. Then
  $\bar\xi_n$ almost surely belongs to the convex hull of
  $\partial^+\X$, whence $\|\bar\xi_n\|\leq \|\partial^+\X\|$ and
  $\|\xi\|\leq\|\partial^+\X\|$. Thus, $\bar\xi_n\to\xi$ in $\Lp$. The
  $\Lp$-continuity of the components of $\vecrisk$ yields that
  \begin{displaymath}
    \vecrisk(\xi)=\lim \vecrisk(\bar\xi_n)
    \leq \lim n^{-1}(x_1+\cdots+x_n)=x,
  \end{displaymath}
  so that $x\in\rhonot(\X)$.

  If $p=\infty$, the above inequality also applies in view of the
  assumed Fatou property and the fact that the norms of $\bar\xi_n$
  are all bounded by the essential supremum of $\|\partial^+\X\|$.
\end{proof}

Denote by
\begin{displaymath}
  h_\X(u)=\sup\{\langle u,x\rangle:\; x\in\X\}
\end{displaymath}
the \emph{support function} of $\X$, where $\langle\cdot,\cdot\rangle$ is the
scalar product in $\R^d$. Then $h_\X(u)$ is a random variable for each
$u$ that may take infinite values. The following result provides a
simple \emph{outer bound} for the selection risk measure of $\X$.

\begin{theorem}[see Prop.~4.6 \cite{cas:mol14}]
  \label{cor:lower-bound}
  Assume that $\vecrisk=(\risk,\dots,\risk)$ has all identical
  components for a coherent $\Lp$-risk measure $\risk$. Then
  \begin{equation}
    \label{eq:upper-bound}
    \rhos(\X)\subset \bigcap_{u\in\R_+^d} \Big\{x:\; \langle x,u\rangle
    \geq \risk(h_{\X}(u))\Big\},
  \end{equation}
  where $\risk(h_{\X}(u))=-\infty$ if $h_\X(u)=\infty$ with a positive
  probability.
\end{theorem}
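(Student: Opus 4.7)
My plan is to prove the inclusion for $\rhonot(\X)$ first, since it is a union of half-spaces (each $u$-constraint being a closed half-space, intersected over all $u\in\R_+^d$ gives a closed set, so passing to the closure $\rhos(\X)=\cl\rhonot(\X)$ is then immediate).

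Fix $a\in\rhonot(\X)$. By definition, $\X+a$ is acceptable, and since $a$ is deterministic, $\Lp(\X+a)=\Lp(\X)+a$, so there exists $\xi\in\Lp(\X)$ with $\vecrisk(\xi+a)\leq 0$. By cash invariance applied componentwise, this reads $\risk(\xi_i)\leq a_i$ for every $i=1,\dots,d$. For fixed $u=(u_1,\dots,u_d)\in\R_+^d$, multiply the $i$th inequality by $u_i$ and sum, obtaining $\sum_i u_i\risk(\xi_i)\leq\langle u,a\rangle$. Now use coherence of $\risk$: by positive homogeneity and subadditivity, $\risk(\langle u,\xi\rangle)=\risk(\sum_i u_i\xi_i)\leq\sum_i u_i\risk(\xi_i)\leq\langle u,a\rangle$. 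Finally, since $u\in\R_+^d$ and $\xi\in\X$ a.s., we have $\langle u,\xi\rangle\leq h_\X(u)$ pointwise, so monotonicity of $\risk$ gives $\risk(h_\X(u))\leq\risk(\langle u,\xi\rangle)$. Chaining these yields $\risk(h_\X(u))\leq\langle u,a\rangle$, which is exactly the defining inequality of the $u$-th half-space. Since $u\in\R_+^d$ was arbitrary, $a$ lies in the intersection.

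The only delicate point is giving meaning to $\risk(h_\X(u))$ when $h_\X(u)$ need not lie in $\Lp$. If $h_\X(u)=\infty$ on a set of positive probability, the convention stated in the theorem sets $\risk(h_\X(u))=-\infty$ and the $u$-th half-space is all of $\R^d$, so the bound is trivially satisfied; this is the case that has to be singled out explicitly. Otherwise $h_\X(u)$ is a.s.\ finite and bounded below by the $\Lp$-selection $\langle u,\xi\rangle$, so $\risk(h_\X(u))$ is well defined in $(-\infty,+\infty]$ by the monotone extension of a coherent risk measure, and the monotonicity step above stays valid.

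The step I expect to require the most care is precisely this interpretation of $\risk$ on possibly non-$\Lp$ variables, together with the (trivial but necessary) verification that the intersection is closed so that the inclusion for $\rhonot(\X)$ extends automatically to its closure $\rhos(\X)$. The remainder is an entirely routine use of cash invariance, positive homogeneity, subadditivity, and monotonicity, which together encode coherence and allow the scalar inequalities $\risk(\xi_i)\leq a_i$ to be aggregated through the linear functional $\langle u,\cdot\rangle$ into the single support-function bound.
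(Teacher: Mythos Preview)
The paper does not give its own proof of this theorem; it is stated with a reference to Proposition~4.6 of \cite{cas:mol14} and no argument is reproduced. Your proof is correct and is exactly the natural argument one would expect in that reference: pick an acceptable selection $\xi$, turn $\vecrisk(\xi+a)\leq 0$ into $\risk(\xi_i)\leq a_i$ by cash invariance, aggregate via the nonnegative weights $u_i$ using homogeneity and subadditivity to obtain $\risk(\langle u,\xi\rangle)\leq\langle u,a\rangle$, and then use monotonicity together with $\langle u,\xi\rangle\leq h_\X(u)$ to reach $\risk(h_\X(u))\leq\langle u,a\rangle$. Your handling of the closure and of the convention when $h_\X(u)=\infty$ with positive probability is also appropriate.
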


\section{Admissible IGTs, attainable positions and their risks}
\label{sec:admissible-crti}

\subsection{Admissible IGTs}

Recall that $C=(C_1,\dots,C_d)$ denotes the terminal positions of the
legal entities evaluated on the granular basis, all expressed in the
same currency. Assume that $C$ is $p$-integrable.

A family of \emph{admissible IGTs} is identified as the family
$\Lp(\bI)$ of $p$-integrable selections of a random closed set $\bI$
in $\R^d$. It is often the case that $\bI$ depends on the terminal
capital positions $C$ and in this case $\bI=\bI(C)$ is written as a
function of $C$ that might also depend on additional randomness,
e.g. random exchange rates.  This gives the possibility to model the
important feature that realistic transfer possibilities depend on the
level of distress of the economic environment.

The \emph{attainable financial positions} at the terminal time after
admissible transfers form the family of selections of the random
closed set
\begin{displaymath}
  \X(C)=C+\bI(C).
\end{displaymath}
It is natural to regard the set $\X=\X(C)$ of attainable positions
preferable over another set $\Y=\Y(C)$ if, for each selection
$\eta\in\Y$, there is a selection $\xi\in\X$ such that $\eta\leq\xi$
with probability one. This partial order can be realized as the
inclusion order $\Y\subset\X$ if the sets of attainable positions are
lower sets in $\R^d$.  For this, we assume that with each admissible
IGT given by a random vector $\zeta$, the set $\bI(C)$ also includes
points that are less than or equal to $\zeta$ in the coordinatewise
order, so that $\bI(C)$ and $\X(C)$ are lower sets. The lower set
assumption is useful to formulate mathematical properties of
risks. While initially it might not seem reasonable to consider IGTs
that involve a disposal of some of the assets, the monotonicity
property of risk measures implies that the agents or their group in no
circumstances would opt for an IGT that involves uncompensated
disposal of assets and even if they would pursue such IGT, then the
position without such a disposal is also acceptable.

We assume throughout the rest of the paper that $\bI$ is a lower set
that almost surely contains the origin, and
\begin{equation}
  \label{eq:subset-H}
  \R^d_-\subset \bI(C)\subset \H=\Big\{x\in\R^d:\; \sum x_i\leq 0\Big\}.
\end{equation}
It means that the nil-transfer is admissible and that admissible
intragroup transfers are financed by the group.  Sometimes is is
useful to assume also that
\begin{equation}
  \label{eq:iteration}
  \bI(C+y)\subset \bI(C)-y
\end{equation}
for each $y\in \bI(C)$. Equivalently, $\X(C')\subset \X(C)$ for all
$C'\in\X(C)$, meaning that the result cannot be improved by
substituting one large transaction by several small ones.

It is essential to stress that $\bI(C)$ is not necessarily a cone. 
In many examples, the set $\bI(C)$ is convex, but it is not
necessarily the case, e.g.\ for fixed transaction costs and
indivisible assets.

\begin{example}
  If there is a fixed range of admissible IGTs given by
  $\{x^{(1)},\dots,x^{(k)}\}$, then generally
  \begin{displaymath}
    \bI(C)=\bigcup_{i=1}^k \big(x^{(i)}+\R_-^d\big)
  \end{displaymath}
  is a non-convex set that does not depend on $C$.
\end{example}

\subsection{Risks of a group}
\label{sec:total-risk}

The position $C$ together with the corresponding admissible IGTs given
by $\bI(C)$ (or the corresponding set $\X(C)$ of attainable positions)
is acceptable if $0\in\rhos(\X(C))$.  The conventional definition of
risk measures in its set-valued variant \cite{ham:hey10,cas:mol14}
suggests passing from the acceptability criterion to the risk measure
by considering the set of all $x\in\R^d$ such that $\X(C+x)$ is
acceptable.

\begin{definition}
  \label{def:group-risk}
  The \emph{group risk} associated with the attainability set $\X(C)$
  is
  \begin{equation}
    \label{eq:rhox}
    \rhox(\X(\cdot),C)=\big\{x\in\R^d:\; 0\in\rhos(\X(C+x))\big\}.
  \end{equation}
\end{definition}

\begin{remark}
  \label{rem:inverse}
  The group risk can be regarded as the inverse of the set-valued
  function $x\mapsto \rhos(\X(C+x))$, see \cite{aub:fra90}. A similar
  inverse appears in \cite{fein:rud:web15} as an approach sensitive to
  the capital levels, where $C$ denotes the set of capital amounts for
  agents, $\X(C)$ is the set of equilibrium prices, and the inverse of
  the selection risk measure of $\X(C+x)$ (in our notation) determines
  the systemic risk associated with the system of
  agents. Definition~\ref{def:group-risk} can be applied to determine
  the risks of some multiasset portfolios from
  \cite[Sec.~2.3]{cas:mol14} that depend non-linearly on the financial
  position. Note that the first argument of $\rhox$ in~(\ref{eq:rhox})
  is a function.
\end{remark}

\begin{remark}
  \label{rem:cone-X}
  If $\bI(C)=\bI$ is a convex cone that does not depend on $C$, like
  it is the case for the conical model of proportional transaction
  costs (see \cite{ham:hey:rud11,kab:saf09,cas:mol14}), then
  $\X(C+x)=\X(C)+x$, so that $\rhos(\X(C+x))=\rhos(\X(C))-x$, whence
  $\rhox(\X(\cdot),C)=\rhos(\X(C))$ is a convex set.  As we see later
  on, in many cases of assessing the group risk, the set $\bI(C)$
  depends on $C$, so that $\X(C+x)$ may substantially differ from
  $\X(C)+x$. Then $\rhox(\X(\cdot),C)$ may become non-convex and so
  considerably more complicated to compute.
\end{remark}

\begin{proposition}$ $
  \label{prop:tot-risk-mon}
  \begin{enumerate}[(i)]
  \item If $\X(x)\subset \Y(x)$ for all $x\in\R^d$, then
    $\rhox(\X(\cdot),C)\subset\rhox(\Y(\cdot),C)$.
  \item $\rhox(\X(\cdot),C)$ contains $\vecrisk(C)+\R_+^d$.
  \item If \eqref{eq:iteration} holds, then $\rhox(\X(\cdot),C)$ is an
    upper set and is non-decreasing as function of $C$, that is
    $\rhox(\X(\cdot),C)\subset\rhox(\X(\cdot),C')$ if $C\leq C'$ a.s.
  \end{enumerate}
\end{proposition}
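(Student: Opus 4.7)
My plan is to handle the three parts by leaning on the monotonicity and cash-invariance properties of $\rhos$ established in Theorem~\ref{thr:selection-risk-monetary}, together with the structural assumptions (\ref{eq:subset-H}) and (\ref{eq:iteration}) on $\bI$.

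For (i), the pointwise inclusion $\X(x)\subset\Y(x)$ yields in particular $\X(C+x)\subset\Y(C+x)$ almost surely, so the monotonicity part of Theorem~\ref{thr:selection-risk-monetary} gives $\rhos(\X(C+x))\subset\rhos(\Y(C+x))$. Hence $0\in\rhos(\X(C+x))$ implies $0\in\rhos(\Y(C+x))$, which is exactly the required inclusion of group risks.

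For (ii), fix $x\geq\vecrisk(C)$. Condition~(\ref{eq:subset-H}) gives $\R_-^d\subset\bI(C+x)$, hence $(C+x)+\R_-^d\subset\X(C+x)$. Using the example following Definition~\ref{def:rhos} together with the cash invariance in Theorem~\ref{thr:selection-risk-monetary}(i), I get $\rhos((C+x)+\R_-^d)=\vecrisk(C+x)+\R_+^d=\vecrisk(C)-x+\R_+^d$. Since $\vecrisk(C)-x\leq 0$, the origin lies in this set, and monotonicity of $\rhos$ places it in $\rhos(\X(C+x))$ as well. Thus $x\in\rhox(\X(\cdot),C)$.

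For (iii), the crucial observation is that (\ref{eq:iteration}) can be applied in the reverse direction by plugging in an element of $\R_-^d$. Let $x\in\rhox(\X(\cdot),C)$ with acceptable selection $\xi\in\Lp(\X(C+x))$, and let $y\in\R_+^d$. Since $-y\in\R_-^d\subset\bI(C+x+y)$ by~(\ref{eq:subset-H}), applying~(\ref{eq:iteration}) with base point $C+x+y$ and transfer $-y$ gives $\bI(C+x)\subset\bI(C+x+y)+y$, i.e.\ $\bI(C+x)-y\subset\bI(C+x+y)$. Now $\xi-(C+x)\in\bI(C+x)$, so $\xi-(C+x+y)=[\xi-(C+x)]-y\in\bI(C+x+y)$, which means $\xi\in\X(C+x+y)$ and remains acceptable; hence $x+y\in\rhox(\X(\cdot),C)$, proving the upper-set property. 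The monotonicity in $C$ follows by the same argument with $y$ replaced by the a.s.\ nonnegative random vector $\Delta=C'-C$, since $-\Delta\in\R_-^d$ a.s.\ so the iteration step goes through unchanged.

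The main conceptual obstacle is noticing that (\ref{eq:iteration}), phrased as a ``no gain from further transactions'' property, yields the useful opposite inclusion $\bI(C+x)-y\subset\bI(C+x+y)$ once one substitutes a vector from $\R_-^d$; after that, the rest of (iii) is bookkeeping, and (i)--(ii) are essentially immediate from Theorem~\ref{thr:selection-risk-monetary}.
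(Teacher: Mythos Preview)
Your argument follows the paper's route in all three parts: (i) via monotonicity of $\rhos$, (ii) via the inclusion $\R_-^d\subset\bI(C+x)$, and (iii) by feeding an element of $\R_-^d$ into (\ref{eq:iteration}) to obtain the reversed inclusion $\bI(C+x)-y\subset\bI(C+x+y)$. The paper establishes monotonicity in $C$ first and then specialises to deterministic shifts for the upper-set claim, whereas you do the two in the opposite order; the mechanism is identical.

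There is one slip in your (iii). You begin with ``let $x\in\rhox(\X(\cdot),C)$ with acceptable selection $\xi\in\Lp(\X(C+x))$'', but $x\in\rhox(\X(\cdot),C)$ only means $0\in\rhos(\X(C+x))$, and $\rhos$ is the \emph{closure} of $\rhonot$; an acceptable selection of $\X(C+x)$ need not exist. Fortunately your computation never uses the acceptability of $\xi$: the chain $\xi-(C+x)\in\bI(C+x)\Rightarrow\xi-(C+x+y)\in\bI(C+x+y)$ works for an arbitrary selection, so you have really shown $\X(C+x)\subset\X(C+x+y)$. From this, Theorem~\ref{thr:selection-risk-monetary}(ii) gives $\rhos(\X(C+x))\subset\rhos(\X(C+x+y))$, and the origin transfers without any appeal to a specific acceptable selection. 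Rephrase the step in terms of this set inclusion (as the paper does) and the proof is complete.
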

\begin{proof}
  \textsl{(i)} In this case $\rhos(\X(C+x))\subset \rhos(\Y(C+x))$ and
  so the inverse function given by \eqref{eq:rhox} is also monotone. 

  \textsl{(ii)} Since $\R^d_-\subset \bI(C)$, we have
  \begin{displaymath}
    \rhos(\X(C))\supset \rhos(C+\R^d_-)=\vecrisk(C)+\R^d_+,
  \end{displaymath}
  so that $\rhox(\X(\cdot),C)\supset(\vecrisk(C)+\R^d_+)$ by (i). 

  \textsl{(iii)} Assume that $C\leq C'$, so that $y=C-C'\in\R^d_-$. By
  \eqref{eq:subset-H}, $z\in\bI(C)$, whence \eqref{eq:iteration}
  yields that $\bI(C)\subset\bI(C')-C+C'$. Thus, $\X(C)\subset\X(C')$
  and $\rhos(\X(C+x))\subset \rhos(\X(C'+x))$, and the total risk is
  monotonic by (i).

  If $x\leq y$, then the above applies to $C+x$ instead of $C$ and
  $C+y$ instead of $C'$, so that
  $\rhos(\X(C+x))\subset\rhos(\X(C+y))$, whence if
  $0\in\rhos(\X(C+x))$, then also $0\in\rhos(\X(C+y))$.
\end{proof}

If all agents operate with the same currency, it is possible to
quantify the risk using a single real number by considering the minimal total
capital requirement for the group.

\begin{definition}
  The \emph{total risk} associated with $\X(C)$ (also called the
  total group risk) is defined by
  \begin{equation}
    \label{eq:total-risk}
    \rhot(\X(\cdot),C)
    =\inf\left\{\sum_{i=1}^d x_i:\; 0\in\rhos(\X(C+(x_1,\dots,x_d)))\right\}\,.
  \end{equation}
\end{definition}

The total risk does not change if $C$ is replaced by $C+z$ for a
deterministic vector $z$ with $\sum z_i=0$. By
Proposition~\ref{prop:tot-risk-mon},
\begin{displaymath}
  \rhot(\X(\cdot),C)\leq \sum \risk_i(C_i).
\end{displaymath}
It is easy to see that the total risk is the support function of
$\rhox(\X(\cdot),C)$ in direction $(-1,\dots,-1)$. In particular, the
acceptability of $\X(C)$ yields that $\rhot(\X(\cdot),C)\leq 0$, but the
opposite conclusion is not necessarily true. The non-positivity of the
total risk yields only the existence of transfers $(x_1,\dots,x_d)$
with the total capital requirement being zero that make $\X(C+x)$
acceptable.  If the infimum in \eqref{eq:total-risk} is attained, then
the vectors $x=(x_1,\dots,x_d)$ that provide the infimum give possible
allocations of the total risk between the legal entities. Then there
is an acceptable selection $\xi$ of $\X(C+x)$, and the regulator could
possibly request \emph{conclusion of legally binding contracts} for
transfers in order to arrive from $C$ to $\xi$.

The set-valued map $\X(C+x)$ is said to be
\emph{upper semicontinuous} as function of $x$ if, for all $\eps>0$,
$x\in\R^d$, and any sequence $x_n$ that converges to $x$,
\begin{equation}
  \label{eq:up-semi}
  \X(C+x_n)\subset \X(C+x)+B_{\zeta_n},
\end{equation}
where $B_{\zeta_n}$ is the closed ball of
radius $\zeta_n$ centred at the origin and $\|\zeta_n\|_p\to0$. This
property can be equivalently formulated for $\bI(C+x)$. 

\begin{proposition}
  \label{prop:closed-total}
  Let $\vecrisk$ be a coherent $\Lp$-risk measure with
  $p\in[1,\infty]$.  If $\partial^+\X(C+x)$ is $p$-integrably bounded
  for all $x\in\R^d$ and $\X(C+x)$ is upper semicontinuous as function
  of $x$, then the set $\rhox(\X(\cdot),C)$ is closed. If also
  $\X(C+x)\subset \xi+x+\R_-^d$ for at least one $\xi\in\Lp(\R^d)$ and all
  $x\in\R^d$, then the infimum in \eqref{eq:total-risk} is attained.
\end{proposition}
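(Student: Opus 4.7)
The plan is to follow the Kom\-lós scheme used in the proof of Proposition~\ref{prop:closed}, interposing upper semicontinuity to transport selections of $\X(C+x_n)$ to selections of $\X(C+x)$. Fix a sequence $x_n\in\rhox(\X(\cdot),C)$ with $x_n\to x$. Since $\partial^+\X(C+x_n)$ is $p$-integrably bounded, Proposition~\ref{prop:closed} gives that $\rhos(\X(C+x_n))=\rhonot(\X(C+x_n))$ is closed, so $0\in\rhos(\X(C+x_n))$ yields an acceptable selection, and by Lemma~\ref{lemma:pareto-x} it can be taken to lie in $\partial^+\X(C+x_n)$: there exists $\xi_n\in\Lp(\partial^+\X(C+x_n))$ with $\vecrisk(\xi_n)\leq0$.

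Next I apply upper semicontinuity~\eqref{eq:up-semi} and the measurable selection theorem to write $\xi_n=\eta_n+b_n$ with $\eta_n\in\Lp(\X(C+x))$ and $\|b_n\|\leq\zeta_n$ a.s., $\|\zeta_n\|_p\to0$. Another measurable selection inside the random set $(\eta_n+\R_+^d)\cap\partial^+\X(C+x)$ yields $\tilde\eta_n\in\Lp(\partial^+\X(C+x))$ with $\tilde\eta_n\geq\eta_n$. Monotonicity of $\vecrisk$ and its $\Lp$-Lipschitz property give
\begin{displaymath}
  \vecrisk(\tilde\eta_n)\leq\vecrisk(\eta_n)=\vecrisk(\xi_n-b_n)
  \leq\vecrisk(\xi_n)+L\|\zeta_n\|_p\mathbf{1}\leq L\|\zeta_n\|_p\mathbf{1}.
\end{displaymath}
Since $\{\tilde\eta_n\}\subset\partial^+\X(C+x)$ is $p$-integrably bounded, the Kom\-lós argument from Proposition~\ref{prop:closed} applied to the sequence $\tilde\eta_n$ produces $\bar\eta\in\X(C+x)$ as an a.s. and $\Lp$-limit of Cesàro averages (using dominated convergence for $p<\infty$ and the Fatou property for $p=\infty$). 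Passing to the limit in the risk bound gives $\vecrisk(\bar\eta)\leq0$, hence $\bar\eta$ is an acceptable selection of $\X(C+x)$ and $x\in\rhox(\X(\cdot),C)$.

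For the second assertion, let $x_n$ be a minimizing sequence, i.e.\ $0\in\rhos(\X(C+x_n))$ and $\sum_i x_{n,i}\to\rhot(\X(\cdot),C)$. Pick any $\xi$ as in the hypothesis. From the first part there exists an acceptable selection $\eta_n\in\X(C+x_n)\subset\xi+x_n+\R_-^d$; monotonicity together with cash invariance of each $\risk_i$ yields
\begin{displaymath}
  0\geq\risk_i(\eta_n)\geq\risk_i(\xi_i+x_{n,i})=\risk_i(\xi_i)-x_{n,i},
\end{displaymath}
so $x_{n,i}\geq\risk_i(\xi_i)$ for every $i$. Combined with the bound $\sum_i x_{n,i}\leq\sum_i\risk_i(C_i)$ from Proposition~\ref{prop:tot-risk-mon}, this confines each coordinate to a bounded interval, so a subsequence converges to some $x$ with $\sum_i x_i=\rhot(\X(\cdot),C)$. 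Closedness of $\rhox(\X(\cdot),C)$ then gives $x\in\rhox(\X(\cdot),C)$, so the infimum is attained.

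The main obstacle is the first part: the set $\X(C+x)$ is the \emph{limit} object and the Pareto boundary bound is only assumed pointwise in $x$, so one cannot directly invoke Proposition~\ref{prop:closed} at $x$. The upper semicontinuity~\eqref{eq:up-semi} and the $\Lp$-Lipschitz continuity of $\vecrisk$ are the bridge that makes the Kom\-lós scheme at $x$ go through despite the $x$-dependence of $\X$; care is required to choose $\eta_n$, $b_n$ and $\tilde\eta_n$ in a measurable way and to handle the case $p=\infty$ via the Fatou property.
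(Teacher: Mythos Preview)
Your argument is correct, but it duplicates work that the paper avoids. The paper's proof is shorter: instead of re-running Koml\'os on the transported selections $\tilde\eta_n$, it works directly at the level of the risk sets $M(x)=\rhos(\X(C+x))$. Upper semicontinuity together with the $\Lp$-Lipschitz property of the components of $\vecrisk$ yields the inclusion $M(x_n)\subset M(x)+B_{\eps_n}$ with $\eps_n\to 0$; since $0\in M(x_n)$ this gives $0\in M(x)+B_{\eps_n}$ for every $n$, and then closedness of $M(x)$ forces $0\in M(x)$.

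Your stated ``main obstacle'' is not in fact an obstacle: the hypothesis that $\partial^+\X(C+x)$ is $p$-integrably bounded is assumed for \emph{every} $x\in\R^d$, in particular at the limit point, so Proposition~\ref{prop:closed} applies there directly and delivers the closedness of $M(x)$. That is exactly how the paper closes the argument without a second Koml\'os pass. Your route effectively re-proves this closedness by hand (transporting to $\tilde\eta_n\in\partial^+\X(C+x)$ and averaging), which works but is unnecessary.

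For the second assertion your argument coincides with the paper's: the containment $\X(C+x)\subset\xi+x+\R_-^d$ forces $\rhox(\X(\cdot),C)\subset\vecrisk(\xi)+\R_+^d$ (the paper records this via Proposition~\ref{prop:tot-risk-mon}), and the remaining compactness-plus-closedness step is the same, only spelled out in more detail in your version.
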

\begin{proof}
  For each $x\in\R^d$, the set $M(x)=\rhos(\X(C+x))$ is closed by
  Proposition~\ref{prop:closed}. Assume that $0\in M(x_n)$, $n\geq1$, and
  $x_n\to x$. By Theorem~\ref{thr:selection-risk-monetary}(ii),
  \begin{displaymath}
    M(x_n)\subset \rhos(\X(C+x)+B_{\zeta_n}).
  \end{displaymath}
  For each selection $\xi$ of $\X(C+x)+B_{\zeta_n}$, there exists a
  selection $\xi'$ of $\X(C+x)$ such that $\|\xi-\xi'\|\leq \zeta_n$.
  Since the components of $\vecrisk$ are Lipschitz in the $\Lp$-norm
  (see \cite{kain:rues09}),
  $\|\vecrisk(\xi)-\vecrisk(\xi')\|\leq c\|\zeta_n\|_p$ for a constant
  $c$, so that
  \begin{displaymath}
    M(x_n)\subset M(x)+B_{\eps_n}
  \end{displaymath}
  for $\eps_n=c\|\zeta_n\|_p\to0$. Thus, $0\in M(x)+B_{\eps_n}$ for
  all $n$. In view of the closedness of $M(x)$, it contains the
  origin, so that $x\in\rhox(\X(\cdot),C)$.

  The monotonicity of the group risk (see
  Proposition~\ref{prop:tot-risk-mon}) yields that
  $\rhox(\X(\cdot),C)\subset \vecrisk(\xi)+\R_+^d$, and so the
  attainability of the infimum follows.
\end{proof}

The following basic properties of the introduced risks are easy to
prove.

\begin{proposition}
  \label{prop:total-properties}$ $
  \begin{enumerate}[(i)]
  \item The group risk and the total risk are cash invariant, i.e.
    \begin{align*}
      \rhox(\X(\cdot+a),C+a)&=\rhox(\X(\cdot),C)-a,\\
      \rhot(\X(\cdot+a),C+a)&=\rhot(\X(\cdot),C)-\sum_{i=1}^d a_i.
    \end{align*}
  \item If $\vecrisk$ used to construct the selection risk measure is
    a homogeneous risk measure and $\bI(tC)=t\bI(C)$ for all $t>0$,
    then $\rhox(\X(t\cdot),tC)=t\rhox(\X(\cdot),C)$ and
    $\rhot(\X(t\cdot),tC)=t\rhot(\X(\cdot),C)$ for all $t>0$.
  \item If \eqref{eq:iteration} holds, then
    $\rhot(\X(\cdot),C)\leq\rhot(\X(\cdot),C')$ for $C'\leq C$.
  \end{enumerate}
\end{proposition}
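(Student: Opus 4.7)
The plan is to prove all three parts by direct expansion of the definitions of $\rhox$ and $\rhot$, invoking the cash invariance and homogeneity of the selection risk measure $\rhos$ from Theorem~\ref{thr:selection-risk-monetary}, together with the monotonicity result of Proposition~\ref{prop:tot-risk-mon}. In each case, the notation $\X(\cdot+a)$ (respectively $\X(t\cdot)$) records that the set-valued function $\X$ is being evaluated at the shifted (scaled) argument, so inside the definition of $\rhox$ the evaluation ultimately reads $\X(C+a+x)$ (respectively $\X(tC+x)$), and a substitution absorbing the shift into the outer variable does most of the work.

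For (i), I will set $y=x+a$ in $\rhox(\X(\cdot+a),C+a)=\{x:0\in\rhos(\X(C+a+x))\}$ to rewrite this as $\{y-a:0\in\rhos(\X(C+y))\}=\rhox(\X(\cdot),C)-a$. The corresponding identity for $\rhot$ follows by taking the infimum of $\sum x_i$ over the translated set, which shifts the value by $-\sum_{i=1}^d a_i$ as required.

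For (ii), the assumption $\bI(tC)=t\bI(C)$ combined with $\X(C')=C'+\bI(C')$ yields $\X(tC+x)=t(C+x/t)+t\bI(C+x/t)=t\,\X(C+x/t)$. Theorem~\ref{thr:selection-risk-monetary}(iv) then gives $\rhos(\X(tC+x))=t\rhos(\X(C+x/t))$, so $\rhox(\X(t\cdot),tC)=\{x:0\in\rhos(\X(C+x/t))\}$ (using $t>0$), and the change of variables $x=tx'$ identifies this with $t\rhox(\X(\cdot),C)$. The total-risk statement then follows from the definition $\rhot=\inf\sum x_i$ applied to the scaled set.

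Part (iii) is shortest: under assumption~\eqref{eq:iteration}, Proposition~\ref{prop:tot-risk-mon}(iii) gives $\rhox(\X(\cdot),C')\subset\rhox(\X(\cdot),C)$ whenever $C'\leq C$, and the infimum of $\sum x_i$ over the larger set $\rhox(\X(\cdot),C)$ is no greater than the infimum over $\rhox(\X(\cdot),C')$. I expect the only delicate point to be in (ii), where the scaling of $\bI$ must be promoted to a set-valued identity $\X(tC+x)=t\X(C+x/t)$ compatible with the closures implicit in $\rhos$; once this identity is in place, both the set-theoretic change of variables in the group risk and the scalar one in the total risk are routine.
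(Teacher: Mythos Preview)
Your proposal is correct and is exactly the kind of direct unfolding of Definitions~\ref{def:group-risk} and~\eqref{eq:total-risk} that the paper has in mind: the paper does not give a proof at all, merely stating that these basic properties ``are easy to prove.'' Your reading of the notation $\X(\cdot+a)$ and $\X(t\cdot)$ is the intended one, and the three ingredients you isolate---the substitution $y=x+a$ for (i), the identity $\X(tC+x)=t\,\X(C+x/t)$ together with homogeneity of $\rhos$ for (ii), and Proposition~\ref{prop:tot-risk-mon}(iii) for (iii)---are precisely what is needed.
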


\begin{remark}
  Despite the group risk has natural properties of a monetary
  set-valued risk measure and the total risk is similar to monetary
  risk measures, we avoid calling them risk measures, since they depend
  on two arguments: a random closed set $\X(C)$ and a specific random point
  $C$ inside this set.
\end{remark}

It is known \cite{lep:mol16} that risk assessment for multiasset
models with random exchange rates may be subject to the so-called
\emph{risk arbitrage} meaning that it is possible to find a sequence
of selections that can be made acceptable by adding capital that
tends to minus infinity, so that it is possible to release an infinite
capital maintaining the acceptability of the position. This is the
case if and only if the total risk attains the value $-\infty$. We
will show that this is impossible for convex risk measures due to condition
\eqref{eq:subset-H}. However, it may become a relevant issue in the
multi-currency setting with random exchange rates, see
Section~\ref{sec:transaction-costs}, and for general monetary risk
measures. Recall that inf-convolution of coherent risk measures is
defined by taking the closed convex hull of their acceptance sets, see
\cite{delb12}. 

\begin{proposition}
  If $\vecrisk$ is a convex risk measure with all identical components
  or a coherent risk measure with a non-trivial inf-convolution of its
  components, then the corresponding total risk
  is different from $-\infty$ for all sets of IGTs that
  satisfy \eqref{eq:subset-H}.
\end{proposition}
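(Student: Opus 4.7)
My plan is to show, in each of the two cases, a \emph{finite} lower bound on $\sum x_i$ over all $x$ for which $0\in\rhos(\X(C+x))$. The common starting point is: if $0\in\rhos(\X(C+x))$, then by Definition~\ref{def:rhos} there exist deterministic $a_n\in\rhonot(\X(C+x))$ with $a_n\to 0$, hence selections $\xi^{(n)}\in\Lp(\X(C+x))$ with $\vecrisk(\xi^{(n)})\leq a_n$ componentwise. Since $\X(C+x)=(C+x)+\bI(C+x)$ and $\bI(C+x)\subset\H$ by \eqref{eq:subset-H}, every selection $\xi^{(n)}$ satisfies
\begin{equation*}
  \sum_{i=1}^d \xi_i^{(n)} \;\leq\; \sum_{i=1}^d(C_i+x_i) \;=\; D+\sum_{i=1}^d x_i\qquad\text{a.s.}
\end{equation*}
This is the key link from the lower-set structure of admissible IGTs to a scalar inequality on aggregated selections.

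For the first case, where $\vecrisk=(\risk,\dots,\risk)$ with $\risk$ convex, I would set $\bar\xi^{(n)}=\frac{1}{d}\sum_i\xi_i^{(n)}$ and use convexity of $\risk$ to obtain
\begin{equation*}
  \risk(\bar\xi^{(n)})\;\leq\;\frac{1}{d}\sum_{i=1}^d\risk(\xi_i^{(n)})\;\leq\;\frac{1}{d}\sum_{i=1}^d a_n^{(i)} \;\to\; 0.
\end{equation*}
Combined with monotonicity and cash invariance applied to $\bar\xi^{(n)}\leq (D+\sum x_i)/d$, this yields
\begin{equation*}
  \risk(D/d)-\frac{1}{d}\sum_{i=1}^d x_i\;\leq\;\risk(\bar\xi^{(n)})\;\leq\;\frac{1}{d}\sum_{i=1}^d a_n^{(i)},
\end{equation*}
hence, passing to the limit, $\sum_i x_i\geq d\,\risk(D/d)$, a finite quantity since $D/d\in\Lp(\R)$ and $\risk$ is finite on $\Lp$.

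For the coherent case, let $\risk_\square=\risk_1\square\cdots\square\risk_d$ denote the inf-convolution, which by hypothesis is non-trivial, i.e.\ a finite-valued coherent risk measure. The defining inequality of the inf-convolution applied to the decomposition $\sum_i\xi_i^{(n)}$ gives $\risk_\square(\sum_i\xi_i^{(n)})\leq\sum_i\risk_i(\xi_i^{(n)})\leq\sum_i a_n^{(i)}$. Using monotonicity and cash invariance of $\risk_\square$ on the bound $\sum_i\xi_i^{(n)}\leq D+\sum_i x_i$ produces $\risk_\square(D)-\sum_i x_i\leq \sum_i a_n^{(i)}\to 0$, so $\sum_i x_i\geq\risk_\square(D)>-\infty$.

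Taking the infimum over admissible $x$ in~\eqref{eq:total-risk} yields $\rhot(\X(\cdot),C)\geq d\,\risk(D/d)$ respectively $\rhot(\X(\cdot),C)\geq\risk_\square(D)$, both finite. The main subtlety I expect is the passage from ``$0\in\rhos$'' to an actual acceptable-up-to-$a_n$ selection: one must justify that the closure operation in the definition of $\rhos$ does not destroy the existence of selections attaining risk values arbitrarily close to $a_n$, which is handled by choosing $a_n\in\rhonot(\X(C+x))$ converging to $0$ and invoking Lemma~\ref{lemma:pareto-x} if needed to select within $\partial^+\X(C+x)$.
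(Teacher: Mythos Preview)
Your proof is correct and follows essentially the same approach as the paper: reduce via \eqref{eq:subset-H} to the inequality $\sum_i\xi_i\leq D+\sum_i x_i$, then apply convexity (averaging) or the inf-convolution (subadditivity) together with monotonicity and cash invariance to produce a finite lower bound on $\sum_i x_i$. The only differences are cosmetic---you argue directly for a lower bound while the paper proceeds by contradiction, and you are slightly more careful than the paper about the closure in the definition of $\rhos$ by passing through $a_n\in\rhonot(\X(C+x))$ with $a_n\to0$.
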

\begin{proof}
  Condition~\eqref{eq:subset-H} yields that $\X(C)\subset C+\H$, so
  that $\rhox(\X(\cdot),C)\subset
  \rhox(\cdot+\H,C)=\rhos(C+\H)$. Thus, it suffices to consider the
  latter set. Assume first that $\vecrisk$ is coherent and there
  exists a sequence $\{x^{(k)},k\geq1\}$ in $\R^d$ such that
  $\sum_{i=1}^d x_i^{(k)}\to-\infty$ as $k\to\infty$ and
  $0\in\rhos(C+\H+x^{(k)})$ for all $k$. Therefore, there exist
  $\xi^{(k)}\in\Lp(\R^d)$, $k\geq1$, such that $\sum_{i=1}^d
  \xi_i^{(k)}\leq0$ a.s. and $\vecrisk(C+\xi^{(k)}+x^{(k)})\leq
  0$. Denote by $\risk_*$ the inf-convolution of the components of
  $\vecrisk$. Then $\risk_*(C_i+\xi_i^{(k)}+x_i^{(k)})\leq0$ for all
  $i=1,\dots,d$. By subadditivity,
  \begin{displaymath}
    \risk_*\Big(\sum_{i=1}^d C_i+\sum_{i=1}^d x_i^{(k)}\Big)
    \leq \risk_*\Big(\sum_{i=1}^d C_i+\sum_{i=1}^d\xi_i^{(k)}+\sum_{i=1}^d
    x_i^{(k)}\Big)
    \leq 0.
  \end{displaymath}
  Thus, $\risk_*(\sum C_i)=-\infty$, which is excluded by the
  condition.  In the convex case the proof is similar with
  $\risk_*=\risk$ and the sums
  replaced by convex combinations with weights $1/d$.
\end{proof}

The following example shows that using non-convex risk measures (like
the Value-at-Risk) may lead to risk arbitrage in the high-dimensional
setting. This kind of arbitrage is intimately related to the notions
of divisibility of risk measures as can be seen when comparing the
following example to the proof of Proposition~2.2 in~\cite{wang14}.
Throughout the paper we use the following definition of the Value-at-Risk
\begin{equation}
  \label{eq:def-var}
  \VaR_\alpha(\eta)=-\inf\big\{x:\;\P(\eta\leq x)\geq \alpha\big\}
\end{equation}
for a random variable $\eta$.

\begin{example}
  Assume that $\vecrisk$ has all identical components
  $\risk=\VaR_\alpha$. Furthermore, assume that $C=0$ almost surely on
  a non-atomic probability space, and $\bI(C)=\H$ in $\R^d$, where
  dimension $d$ satisfies $d>\alpha^{-1}$. Partition $\Omega$ into
  subsets $A_1,\dots,A_d$ of probability $d^{-1}$ each. Let
  $\eta^{(n)}_i(\omega)=-(d-1)n$ for $\omega\in A_i$ and
  $\eta_i^{(n)}(\omega)=n$ otherwise, $i=1,\dots,d$. Then $\sum
  \eta^{(n)}_i=0$, so that $\eta^{(n)}\in\H$.  Further,
  $\lim_{n\to\infty} \risk(\eta^{(n)}_i)=-\infty$, since
  $\eta^{(n)}_i=n$ outside a set of probability $d^{-1}<\alpha$.

  Such a construction is not possible if $d<\alpha^{-1}$. In this
  case, the limit property of $\risk(\eta^{(n)}_i)$ yields that for
  any large $a$ and all $i=1,\dots,d$, $\eta^{(n)}_i>a$ outside a set
  $A_i$ of probability at most $\alpha$.  Since $d\alpha<1$, the union
  of all these sets does not cover $\Omega$. This means that all
  components of $\eta^{(n)}$ exceed $a$ simultaneously with positive
  probability, so that $\eta^{(n)}$ is not a selection of $\H$.
\end{example}

\begin{remark}
  In the two-dimensional setting, the existence of the risk arbitrage
  for $\bI(C)=\H$ and the risk measure $\vecrisk=(\risk,\risk)$ means
  that
  \begin{displaymath}
    \risk(\zeta_n)+\risk(-\zeta_n)\to -\infty
  \end{displaymath}
  for a sequence $\zeta_n\in\Lp(\R)$, $n\geq1$. This is clearly
  impossible if $\risk$ is convex or if $\risk=\VaR_\alpha$ with
  $\alpha <1/2$, while it might be the case if $\alpha>1/2$.
\end{remark}

\subsection{Absolute acceptability}
\label{sec:absol-accept-1}

The main setting in the theory of multivariate risk measures concerns
the case of a single agent operating with several currencies or on
various markets. In this case, the existence of an acceptable
selection from the set $\X$ of attainable positions is a natural
acceptability requirement.

In contrary, the interests of agents, and in particular of policy
holders of different legal entities in a group may differ.  When
trying to balance policy holder interests across the whole group, one may
particularly appreciate transfers that do not worsen the situation of
any policy holder, in other words satisfy the \emph{individual
  rationality constraints}, see \cite{jouin:sch:touz08}.

\begin{definition}
  \label{def:abs-risk}
  The pair $(\X(\cdot),C)$ is called \emph{absolutely acceptable} if
  $\X(C)$ admits a selection $\xi\in \Lp(\X(C))$ such that
  $\vecrisk(\xi)\leq 0$ and $\vecrisk(\xi)\leq\vecrisk(C)$.
\end{definition}

In other words, such selection $\xi$ has all individually acceptable
components and each of its components has lower risk than the
corresponding component of $C$. Financially, this may be interpreted
as an admissible IGT that leads to acceptable positions of all agents
without worsening the individual risk assessment of each individual
agent.

The condition $\vecrisk(\xi)\leq \vecrisk(C)$ may be relaxed by
requiring that $\vecrisk(\xi)\in \vecrisk(C)+\K$ for a cone
$\K\subset\R_-^d$ that describes the set of individual risks that are
considered acceptable by all agents within the group.
For a discussion of generalized individual rationality constraints we
refer to~\cite{rav:svi14}.

Clearly, $(\X(C),C)$ is absolutely acceptable if $C$ is acceptable,
while the following example shows that the converse is not
necessarily the case.

\begin{example}
  Let $\X(C)=C+\H$ for $C=(C_1,C_2)$, and let $\vecrisk=(\risk,\risk)$
  have two identical components being the Expected Shortfall
  ($\ES_{0.01}$) at level $0.01$. Assume that $\risk(C_1)<0$, say with
  $C_1$ having the standard normal distribution, and let
  $C_2=\min(a-C_1,0)$ for some $a\geq \E(C_1)$, say $a=\E(C_1)$. Then
  $C_2$ is clearly not acceptable, and so $(C_1,C_2)$ is not
  acceptable. But $(\X(C),C)$ is absolutely acceptable. It is possible
  to reduce the risk of $C_2$ by a transfer without worsening the risk
  of $C_1$, simply because the non-acceptability of $C_2$ stems from
  its behaviour on a set on which $C_1$ takes rather high values. More
  specifically, if $\eta=C_2$, then $(C_1+\eta,C_2-\eta)$ is
  acceptable. Because $C_2-\eta=0$, we have
  $0=\risk(C_2-\eta)<\risk(C_2)$. Furthermore,
  $\risk(C_1+\eta)=\risk(C_1)$, since $\eta=0$ on the event
  $\{\omega:\; C_1(\omega)<a\}$ that has probability at least
  $1/2$. Hence, the $0.01$-quantiles (and all lower quantiles)
  of $C_1$ and $C_1+\eta$ coincide, and we conclude
  $\risk(C_1)=\risk(C_1+\eta)$. Hence $(C_1+\eta,C_2-\eta)$ is
  acceptable and has a componentwisely lower risk than $(C_1,C_2)$.
\end{example}

In many cases, it is impossible to ensure that none of the agents
suffers a deterioration of risk after the optimal risk allocation. The
following result shows that it is possible to achieve individual
rationality after an initial capital transfer.

\begin{proposition}
  \label{prop:prices-crti}
  Assume that there exists $\xi\in\Lp(\X(C))$ such that $\sum
  \risk_i(\xi_i)=\rhot(\X(\cdot),C)$.  Then there
  exists $p=(p_1,\dots,p_d)\in\R^d$ such that $\sum p_i=0$ and
  $\vecrisk(\xi+p)\leq\vecrisk(C)$.
\end{proposition}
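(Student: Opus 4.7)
The plan is a short direct construction that exploits only the cash invariance of each component $\risk_i$. First I would set
\[
q_i = \risk_i(\xi_i) - \risk_i(C_i), \qquad i=1,\dots,d,
\]
so that $q_i$ measures how agent $i$'s individual risk has shifted under the risk-minimising rebalancing from $C$ to $\xi$. The key observation is that the sum of these shifts is nonpositive: by hypothesis $\sum_i \risk_i(\xi_i) = \rhot(\X(\cdot),C)$, and the admissibility of the nil-transfer (which follows from $0\in\bI(C)$, as used just after the definition of $\rhot$) gives $\rhot(\X(\cdot),C) \leq \sum_i \risk_i(C_i)$. Consequently $\sum_i q_i = \rhot(\X(\cdot),C) - \sum_i \risk_i(C_i) \leq 0$, which is precisely the pool of ``savings'' produced by the optimal rebalancing and available to compensate the agents whose individual risk has deteriorated.

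Given this, I would redistribute the nonnegative scalar $-\sum_j q_j$ uniformly among the agents by setting
\[
p_i = q_i - \frac{1}{d}\sum_{j=1}^d q_j, \qquad i=1,\dots,d.
\]
Then $\sum_i p_i = 0$ by construction, and $p_i - q_i = -\frac{1}{d}\sum_j q_j \geq 0$, so $p_i \geq q_i$ for every $i$. Cash invariance of $\risk_i$ now yields
\[
\risk_i(\xi_i + p_i) = \risk_i(\xi_i) - p_i \leq \risk_i(\xi_i) - q_i = \risk_i(C_i),
\]
which is exactly the required vector inequality $\vecrisk(\xi+p) \leq \vecrisk(C)$. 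I do not foresee a real obstacle: the content of the statement is simply that the scalar slack $\sum_i \risk_i(C_i) - \rhot(\X(\cdot),C) \geq 0$ can be split among the agents as a deterministic self-financed cash redistribution, and the uniform split above is just the simplest witness; any allocation $p$ of this nonnegative total with $p_i \geq q_i$ for all $i$ would work equally well.
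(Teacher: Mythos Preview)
Your argument is correct and is essentially the same as the paper's: the paper fixes any point $a$ in the nonempty set $M=(\vecrisk(C)+\R_-^d)\cap\{x:\sum x_i=\rhot(\X(\cdot),C)\}$ and takes $p=\vecrisk(\xi)-a$, which amounts exactly to your condition $p_i\geq q_i$ with $\sum p_i=0$; your uniform split is simply the particular choice $a=\vecrisk(C)+\tfrac{1}{d}\big(\sum_j q_j\big)\mathbf{1}\in M$. Both proofs rest on the same two ingredients, namely $\rhot(\X(\cdot),C)\leq\sum_i\risk_i(C_i)$ and cash invariance, and your closing remark that any $p$ with $p_i\geq q_i$ and $\sum p_i=0$ works is precisely the paper's freedom in choosing $a\in M$.
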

\begin{proof}
  The set $\rhox(\X(\cdot),C)$ contains $\vecrisk(C)$. Furthermore,
  \begin{displaymath}
    M=\big(\vecrisk(C)+\R_-^d\big)\cap \Big\{x:\; \sum x_i
    = \rhot(\X(\cdot),C)\Big\}\neq\emptyset,
  \end{displaymath}
  since otherwise $\sum\risk_i(C_i)$ would be lower than the total
  risk. For any $a\in M$, $p=-a+\vecrisk(\xi)$ satisfies the
  requirements.
\end{proof}

The vector $p$ from Proposition~\ref{prop:prices-crti} determines the
prices of risk that the agents pay (or receive) at time zero in order
that the resulting positions do not worsen the risk of any agent and
that the total group risk is the smallest. This result was obtained in
\cite[Th.~3.3]{jouin:sch:touz08} for two agents.

\section{Granular and consolidated tests}
\label{sec:gran-cons-tests}

\subsection{Strictly granular test}
\label{sec:gran-tests}

The \emph{strictly granular} approach presumes that no non-trivial IGTs are
allowed, so that $\bI(C)=\R_-^d$, and
\begin{displaymath}
  \X(C)=C+\R_-^d=(-\infty,C_1]\times\cdots\times (-\infty,C_d]\,.
\end{displaymath}
Then all selections from $\X(C)$ have risks that are not better than
$\vecrisk(C)$, so that $\rhox(\X(\cdot),C)=[\vecrisk(C),\infty)$ and
$\rhot(\X(\cdot),C)=\sum \risk_i(C_i)$ for all monetary risk measures.

\subsection{Consolidated and unconstrained tests}
\label{sec:consolidated-tests}

Recall that the increasingly popular \emph{consolidated} approach
requires the random variable $D$ defined in~(\ref{eq:def-D})
to be acceptable with respect to a prescribed risk measure. It turns
out that, in the coherent case, this setting corresponds to the largest
set of admissible IGTs and would imply 
unrestricted fungibility for all assets,
i.e.\ at the end of the
considered time period assets can be freely used to settle any
liabilities within the group. In this case,
\begin{displaymath}
  \bI(C)=\H=\{x=(x_1,\dots,x_d):\; \sum x_i \leq 0\}
\end{displaymath}
is a half-space, and
\begin{equation}
  \label{eq:xc-half-space}
  \X(C)=\{x=(x_1,\dots,x_d):\; \sum x_i \leq D\}=C+\H\,.
\end{equation}
Note that $\max(0,-D)$ is $p$-integrable if and only if $\X(C)$ has a
$p$-integrable selection.
This may be the case even if $C$ itself is not $p$-integrable.

We call the solvency test that requires $\X(C)$
from~(\ref{eq:xc-half-space}) to be acceptable the \emph{unconstrained
  solvency test}.  In this case,
$\rhox(\X(\cdot),C)=\rhos(\X(C))=\rhos(C+\H)$ and 
\begin{equation}
  \label{eq:total-risk-1}
  \rhot(\X(\cdot),C)
  =\inf\left\{\sum_{i=1}^d \risk(\xi_i):\; \xi=(\xi_1,\dots,\xi_d)\in\
    \Lp(C+\H)\right\}\,.
\end{equation}

In the case of two agents, this situation is
studied in depth in \cite{jouin:sch:touz08}. 
The following result provides an independent analysis of this setting for
$\vecrisk$ with all identical components and extends it for the convex
case and any number of agents.

\begin{theorem}
  \label{thr:trans-consolidated}
  Let $C=(C_1,\dots,C_d)$ be a $p$-integrable random vector and let
  $\vecrisk=(\risk,\dots,\risk)$ for a monetary $\Lp$-risk measure
  $\risk$. Furthermore, let $\X(C)$ be given by
  \eqref{eq:xc-half-space}.
  \begin{enumerate}[i)]
  \item If $\risk$ is coherent, then $\X(C)$ admits an acceptable
    selection if and only if $\risk(D)\leq 0$. In this case, the
    infimum in \eqref{eq:total-risk-1} is attained at
    $\xi=d^{-1}(D,\dots,D)$ and $\rhot(\X(\cdot),C)=\risk(D)$.
  \item Assume that, for any $p$-integrable random vector
    $C$, $\X(C)$ admits an acceptable selection if and only if
    $\risk(D)\leq 0$. Then $\risk$ is subadditive.
  \item If $\risk$ is convex, then $\X(C)$ admits an acceptable selection
    if and only if $\risk(\frac{D}{d})\leq 0$.  The
    infimum in \eqref{eq:total-risk-1} is attained at
    $\xi=d^{-1}(D,\dots,D)$ and $\rhot(\X(\cdot),C)=d\risk(D/d)$. 
  \item Assume that, for any $d$ and for any $(C_1,\dots,C_d)$, the
    acceptability of the random vector $(C_1,\dots,C_d)$ yields the
    acceptability of $d^{-1}(D,\dots,D)$. Then $r$ is convex.
  \end{enumerate}
\end{theorem}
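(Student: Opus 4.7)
The plan is to treat the four statements in pairs, since (i) and (iii) are direct constructions sharing a template, while (ii) and (iv) are the converses that require turning the hypothesis into an analytic property of $\risk$.

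For (i), the key idea is to couple the equal-split selection $\xi=d^{-1}(D,\dots,D)$ with the standard inequalities of a coherent risk measure. If $\risk(D)\le0$, this $\xi$ lies in $\Lp(C+\H)$ because $\sum\xi_i=D$, and positive homogeneity yields $\risk(\xi_i)=d^{-1}\risk(D)\le0$, so $\xi$ is acceptable. Conversely, for any acceptable selection $\eta\in\Lp(C+\H)$, subadditivity combined with monotonicity gives the chain
\begin{equation*}
\risk(D)\le\risk\Big(\sum\eta_i\Big)\le\sum\risk(\eta_i)\le0,
\end{equation*}
and exactly this chain shows $\sum\risk(\eta_i)\ge\risk(D)$, with equality realised at the equal-split since homogeneity gives $\sum\risk(D/d)=\risk(D)$. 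Part (iii) follows the same template, replacing subadditivity with convexity $\risk(d^{-1}\sum\eta_i)\le d^{-1}\sum\risk(\eta_i)$; because homogeneity is unavailable, the bound registers at $\risk(D/d)$ rather than at $\risk(D)$, producing $\rhot=d\risk(D/d)$.

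For (ii), the plan is to manufacture from any $(C_1,\dots,C_d)$ a suitable shifted vector whose $\X$ admits an acceptable selection with controlled sum, so that the hypothesised iff forces subadditivity. Setting $b=\sum\risk(C_i)$ and $C'_i=C_i+b/d$, cash invariance gives $\sum\risk(C'_i)=0$; the vector with components $\xi_i=C'_i+\risk(C'_i)$ then has each $\risk(\xi_i)=0$ and satisfies $\sum\xi_i=\sum C'_i$, hence it is an acceptable selection of $\X(C')$. The hypothesis forces $\risk(\sum C'_i)\le0$, and cash invariance converts $\risk(\sum C_i+b)\le0$ into the desired inequality $\risk(\sum C_i)\le\sum\risk(C_i)$.

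For (iv), the plan is to read off rational convex combinations of the acceptance set from the hypothesis and upgrade to real weights via Lipschitz continuity. Given $X,Y\in\Lp(\R)$ and a rational $\lambda=k/d$, assemble $(\xi_1,\dots,\xi_d)$ consisting of $k$ copies of $X+\risk(X)$ and $d-k$ copies of $Y+\risk(Y)$; each $\xi_i$ has risk zero, so by hypothesis $\risk(d^{-1}\sum\xi_i)\le0$, i.e.
\begin{equation*}
\risk\bigl(\lambda X+(1-\lambda)Y+\lambda\risk(X)+(1-\lambda)\risk(Y)\bigr)\le0,
\end{equation*}
which after cash invariance is the convexity inequality at the rational point $\lambda$. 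Extending to arbitrary $\lambda\in[0,1]$ uses the $\Lp$-Lipschitz continuity of $\risk$ recalled in Section~\ref{sec:set-valu-portf}. The main obstacle lies precisely in the converses (ii) and (iv): the constructed selection must have componentwise risks that are truly $\le0$ rather than merely a nonpositive aggregate risk, and this is achieved by the explicit cash shifts that force each individual risk to zero before the hypothesis is invoked.
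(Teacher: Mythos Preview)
Your proof is correct and follows essentially the same approach as the paper's: the equal-split selection together with subadditivity/convexity for (i) and (iii), cash shifts to force zero componentwise risks for (ii) and (iv), and the copies-of-$C_1,C_2$ trick plus continuity for rational-to-real extension in (iv). The only cosmetic difference is in (ii), where your extra preliminary shift by $b/d$ is unnecessary---the paper simply takes $C'_i=C_i+\risk(C_i)$ directly (which is exactly what your $\xi_i$ reduces to anyway).
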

\begin{proof}
  \textsl{i)} Assume that $\risk(D)\leq 0$ and define
  $\xi=d^{-1}(D,\dots,D)$, which is a selection of $\X(C)$. Since
  $\risk$ is positive homogeneous, all components of $\xi$ are
  acceptable, and $\eta=\xi-C$ yields the corresponding IGT. Note
  that the sum of coordinates of $\eta$ vanishes almost surely.

  Conversely, assume that there exists a selection $\xi$ of $\X(C)$
  such that $\vecrisk(\xi)\leq 0$. Let $\hat\xi$ be the projection of
  $\xi$ onto the boundary of $\X(C)$. Note that $\sum \hat\xi_i \geq
  \sum\xi_i$. Then $\hat \xi=C+\eta$, with $\eta=\hat\xi-C$ such that
  $\sum\eta_i=0$ a.s.  Hence,
  \begin{align*}
    \risk(D)=\risk\Big(D+\sum \eta_i\Big) =\risk\Big(\sum \hat\xi_i\Big)
    \leq \risk(\sum \xi_i)\leq
    \sum \risk(\xi_i)\leq 0\,.
  \end{align*}
  The infimum in \eqref{eq:total-risk} equals the infimum of all
  $a=\sum x_i$ such that $\risk(D+a)\leq 0$ and so is attained.

  \textsl{ii)} For any $p$-integrable random vector $C$, define
  $C'=C+\vecrisk(C)$.  By the monetary property, $\vecrisk(C')=0$,
  so every $C'_1, \dots, C'_d$ is acceptable. By the assumption,
  $D'=\sum C'_i$ is acceptable, while the monetary
  property yields that
  \begin{displaymath}
    0\geq\risk(D')=\risk\Big(\sum C_i +\sum \risk(C_i)\Big)
    =\risk\Big(\sum C_i\Big)-\sum \risk(C_i)
  \end{displaymath}
  as desired.

  \textsl{iii)} If $\risk(D/d)\leq 0$, then
  $\xi=d^{-1}(D,\dots,D)$ is an acceptable selection. Conversely,
  assume $\xi=(\xi_1,\dots,\xi_d)$ is an acceptable selection.  Then
  $\sum \xi_i\leq \sum C_i$, hence $\frac1d \sum \xi_i \leq
  \frac1d \sum C_i$. By convexity, 
  \begin{displaymath}
    \risk(D/d)\leq
    \risk\Big(\frac{1}{d}\sum \xi_i\Big)\leq \frac{1}{d} \sum \risk(\xi_i)\leq 0.
  \end{displaymath}
  
  \textsl{iv)} By the cash invariance, the assumption is equivalent to
  \begin{displaymath}
    r(D/d)\leq \frac{1}{d} \sum r(C_i)\,.
  \end{displaymath}
  We have to show that
  \begin{displaymath}
    r(\lambda C_1 + (1-\lambda) C_2)
    \leq \lambda r(C_1)+(1-\lambda) r(C_2)
  \end{displaymath}
  for any $C_1, C_2$ and $0\leq \lambda \leq 1$.  Due to the strong
  continuity of the components of $\vecrisk$, it suffices to show this
  for rational $\lambda=\frac{m}{n}$.  Applying the assumption to the
  random vector $(C_1,\dots,C_1,C_2,\dots,C_2)$ consisting of $m$
  copies of $C_1$ and $n-m$ copies of $C_2$ yields that
  \begin{displaymath}
    r\left(\frac{mC_1+(n-m)C_2}{n}\right)
    \leq \frac{m}{n}r(C_1)+\frac{n-m}{n}r(C_2),
  \end{displaymath}
  hence, the assertion is proved.
\end{proof}

If the components of $\vecrisk$ are not necessarily coherent, then the
classical consolidated approach and the unconstrained approach do not
necessarily result in the same risks.  If $D=\sum C_i$ is acceptable,
then $C+\H$ admits an acceptable selection, which is given, e.g., by
$(D,0,\dots,0)$. However, neither the acceptability of $D$ nor of
$\frac1d D$ can be concluded from the existence of an acceptable
selection in the non-convex case, see Proposition~\ref{prop:non-convex-D}.  Thus, the group
risk calculated on the basis of the unconstrained solvency test may be
too optimistic in comparison with the classical consolidated test in
the non-convex setting.

\begin{proposition}
  \label{prop:non-convex-D}
  Assume that $d\geq2$ and $\vecrisk$ has identical components being
  the Value-at-Risk.
  \begin{itemize}
  \item[(i)] If $C_1,\dots,C_d$ are acceptable for $\VaR_\alpha$, then
    $D=\sum C_i$ is acceptable for $\VaR_{d\alpha}$.
  \item[(ii)] 
    If $\beta<d\alpha$, then there are random variables
    $C_1,\dots,C_d$ such that $C_i$ is acceptable for $\VaR_\alpha$,
    but $D$ is not acceptable for $\VaR_\beta$. In particular, this is
    the case for $\beta=\alpha$. 
  \end{itemize}
\end{proposition}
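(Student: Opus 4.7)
For part (i), the plan is to unpack definition~\eqref{eq:def-var}: acceptability of a random variable $\eta$ for $\VaR_\alpha$ is equivalent to $\P(\eta\le y)<\alpha$ for every $y<0$. The key observation is the set inclusion $\{D\le y\}\subseteq\bigcup_{i=1}^d\{C_i\le y/d\}$, which follows because $C_i>y/d$ for every $i$ would force $D>y$. For any $y<0$ the threshold $y/d$ is still negative, so the hypothesis gives $\P(C_i\le y/d)<\alpha$ for each $i$, and a union bound yields $\P(D\le y)<d\alpha$. Since this holds for every $y<0$, I conclude $\VaR_{d\alpha}(D)\le 0$.

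For part (ii), I would construct an explicit counterexample exploiting the well-known failure of subadditivity of $\VaR$. The idea is to pick an auxiliary level $\alpha'$ satisfying $\beta/d<\alpha'<\alpha$ and $\alpha'\le 1/d$, which is feasible because $\beta<d\alpha$ and $\beta\le 1$. I then choose disjoint events $A_1,\dots,A_d$ each of probability $\alpha'$ and set
\[
C_i=-M\,\one_{A_i}+\varepsilon\,\one_{A_i^c}
\]
for a large $M>0$ and a small $\varepsilon>0$. A direct computation gives $\P(C_i\le y)=\alpha'<\alpha$ for $y\in[-M,\varepsilon)$ and $\P(C_i\le y)=0$ for $y<-M$, so $\VaR_\alpha(C_i)=-\varepsilon<0$. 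On each $A_i$ the sum $D$ equals $-M+(d-1)\varepsilon$, and therefore $\P(D\le -M+(d-1)\varepsilon)\ge d\alpha'>\beta$, which forces $\VaR_\beta(D)\ge M-(d-1)\varepsilon>0$. The ``in particular'' case $\beta=\alpha$ is then immediate, since $\alpha<d\alpha$ whenever $d\ge 2$.

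Neither step poses a significant mathematical obstacle; the actual work is the bookkeeping around the auxiliary level $\alpha'$. The inequality $\alpha'>\beta/d$ is what pushes $D$ into the $\beta$-tail, while $\alpha'\le 1/d$ is what lets the $A_i$ be realised as genuinely disjoint events; the former is exactly the quantitative content of $\beta<d\alpha$, and the latter is automatic from $\beta\le 1$ (with the extreme case $\beta=1$ handled by taking $\alpha'=1/d$, so that $d\alpha'=1\ge\beta$).
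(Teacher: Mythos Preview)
Your proof is correct and follows essentially the same approach as the paper: part~(i) is the same union bound (the paper writes it with $\P(C_i<-\eps)$ and $\P(D<-d\eps)$, you with $\P(C_i\le y/d)$ and $\P(D\le y)$), and part~(ii) is the same disjoint-indicator construction, the paper taking the specific midpoint $\tilde\alpha=(\alpha+\beta/d)/2$ and $C_i=-\one_{\Omega_i}$ where you allow any $\alpha'\in(\beta/d,\alpha)$ and the slightly more general $C_i=-M\one_{A_i}+\varepsilon\one_{A_i^c}$. Your explicit check that $\alpha'\le 1/d$ can be arranged (so that the disjoint events actually exist) is a point the paper passes over silently.
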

\begin{proof}
    (i) The assumption yields $\P(C_i<-\eps)<\alpha$, for each
  $\eps>0$. Then
  \begin{displaymath}
    \P(D<-d\eps)\leq \sum \P(C_i<-\eps)<
  d\alpha\,,
  \end{displaymath}
  implying the acceptability of $D$ under $\VaR_{d\alpha}$.

  (ii) Choose a non-atomic probability space and let
  $\Omega_1,\dots,\Omega_d$ be mutually disjoint events of probability
  $\tilde\alpha=\alpha-(d\alpha-\beta)/(2d)$. Define
  $C_i(\omega)=-1$ for $\omega\in\Omega_i$, and $C_i(\omega)=0$
  otherwise. Then $\VaR_\alpha(C_i)\leq 0$ for all $i$, while
  \begin{displaymath}
    \P\left(D<-\frac{1}{2}\right)
    =\P(\cup_{i=1}^{d}\Omega_i)
    =\sum_{i=1}^{d}\P(\Omega_i)=d\tilde\alpha
    =\frac{1}{2}(d\alpha+\beta)>\beta,
  \end{displaymath}
  so that $D$ is not acceptable for $\VaR_\beta$.
\end{proof}

\section{Restrictions of transfers}
\label{sec:restr-transf}

\subsection{Restrictions of transfers and the total risk}
\label{sec:restr-transf-and-tot-risk}

In the general single currency setting, the sets of admissible
terminal portfolios are sandwiched between the strictly granular
approach and unconstrained approach. Which IGTs are accepted as
admissible is not primarily a mathematical question.
In the sequel we show how considerations regarding solvency
assessment, transaction costs, and some sources of risks might be
reflected in the design of the set $\bI(C)$ of admissible IGTs.

The choice $\bI=\H$ corresponds to the unconstrained risk
assessment and so yields the largest possible set of admissible IGTs
in the single currency setting.

The monotonicity property from
Proposition~\ref{prop:tot-risk-mon} in a coherent case yields
that, under any restrictions, the group will never achieve better
risks comparing to those obtained by performing the unconstrained
solvency test approach. This is formalized in the following
proposition.

\begin{proposition}
  Let $C=(C_1,\dots,C_d)$ be a $p$-integrable random vector and let
  $\vecrisk=(\risk,\dots,\risk)$ for a convex $\Lp$-risk measure
  $\risk$. If $\X(C)$ is any family of admissible IGTs, then the
  corresponding total risk is at least $d\,\risk(d^{-1}\sum C_i)$, which
  is the total risk of the unconstrained solvency test.
\end{proposition}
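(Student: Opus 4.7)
The plan is to reduce the statement to a monotonicity argument combined with the already-proved formula for the unconstrained test. The essential observation is that, by hypothesis \eqref{eq:subset-H}, any admissible IGT set satisfies $\bI(C) \subset \H$, so $\X(C) = C + \bI(C) \subset C + \H$, and the same holds for $\X(C+x)$ for every $x \in \R^d$. Hence the group risk associated with $\X(\cdot)$ is contained in the group risk associated with the unconstrained test, by Proposition~\ref{prop:tot-risk-mon}(i).

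From this set inclusion, I would then translate to a numerical inequality for the total risk. Since $\rhot(\X(\cdot),C)$ is (up to sign) the support function of $\rhox(\X(\cdot),C)$ in direction $(-1,\dots,-1)$, as noted right after Definition~5.3, the inclusion $\rhox(\X(\cdot),C) \subset \rhox((\cdot)+\H,C)$ implies
\begin{displaymath}
  \rhot(\X(\cdot),C) \geq \rhot((\cdot)+\H,C),
\end{displaymath}
because taking the infimum over a smaller set yields a larger value. Hence it suffices to identify the right-hand side.

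Finally, I would invoke Theorem~\ref{thr:trans-consolidated}(iii): for a convex $\Lp$-risk measure $\risk$ with all identical components and $\X(C) = C + \H$, the infimum in \eqref{eq:total-risk-1} is attained at $\xi = d^{-1}(D,\dots,D)$ and equals $d\,\risk(D/d) = d\,\risk(d^{-1}\sum C_i)$. Combining this with the preceding inequality gives exactly the claim.

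There is no real obstacle here beyond correctly chaining the two monotonicity steps: inclusion of admissible sets propagates to inclusion of group risks via Proposition~\ref{prop:tot-risk-mon}(i), and that inclusion yields the desired numerical bound on the total risk. The only minor care needed is to verify that Proposition~\ref{prop:tot-risk-mon}(i) applies with $\Y(x) = x + \H$ (a fixed, cone-based attainability structure), for which the hypothesis $\X(x) \subset \Y(x)$ indeed holds pointwise in $x$, so the argument goes through without requiring \eqref{eq:iteration}.
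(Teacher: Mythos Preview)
Your proposal is correct and matches the paper's own argument, which is given only implicitly in the sentence preceding the proposition (``The monotonicity property from Proposition~\ref{prop:tot-risk-mon} \ldots\ yields that \ldots''); the paper does not supply a separate proof. You have simply spelled out the two steps---the set inclusion from \eqref{eq:subset-H} feeding into Proposition~\ref{prop:tot-risk-mon}(i), and the identification of the unconstrained total risk via Theorem~\ref{thr:trans-consolidated}(iii)---exactly as intended.
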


A restriction of $\H$ increases the complexity of the set-valued
solvency tests considerably. In view of that it is natural to ask
whether the impact on the resulting capital requirements is
material or not compared to the unconstrained and strictly
granular approaches.

\begin{proposition}
  \label{prop:no-change-tot-risk}
  Assume that $\vecrisk=(\risk,\dots,\risk)$ for a coherent $\Lp$-risk
  measure $\risk$.  If the unconstrained total group risk does not
  increase by restricting the transfers to $\bI(C)$, then, for some
  $x=(x_1,\dots,x_d)$ with $\sum x_i = \risk(\sum C_i)$, the set
  $\bI(C+x)$ contains a $p$-integrable selection $\eta$ such that
  \begin{equation}
    \label{minrisk}
    \risk(\sum C_i)= \sum \risk(C_i+\eta_i).
  \end{equation}
\end{proposition}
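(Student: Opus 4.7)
The plan is to show that under the stated hypothesis every inequality in a short chain involving subadditivity, monotonicity, and individual acceptability must collapse to an equality, forcing the desired identity. First I would recall that by Theorem~\ref{thr:trans-consolidated}(i) the unconstrained total group risk equals $\risk(D)$ with $D=\sum_{i=1}^{d}C_i$, and that this infimum is attained. Second, since~\eqref{eq:subset-H} gives $\bI(C)\subset\H$, the monotonicity statement in Proposition~\ref{prop:tot-risk-mon}(i) forces the restricted total risk to be \emph{at least} $\risk(D)$. Together with the hypothesis that restricting the transfers does not increase the total risk, this pins the restricted total risk exactly at $\risk(D)$, and the hypothesis is read as saying this value is actually attained: there exist $x=(x_1,\dots,x_d)$ with $\sum x_i=\risk(D)$ and an acceptable selection $\xi\in\Lp(\X(C+x))$, i.e.\ $\vecrisk(\xi)\le 0$.

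Next I would set $\eta=\xi-(C+x)$, which is a $p$-integrable selection of $\bI(C+x)$ by construction, and lies in $\H$ thanks to~\eqref{eq:subset-H}, so that $\sum_{i}\eta_i\le 0$ almost surely. From $\risk(C_i+x_i+\eta_i)\le 0$ and the cash invariance of $\risk$, I get $\risk(C_i+\eta_i)\le x_i$ for each $i$; summing yields $\sum_i\risk(C_i+\eta_i)\le\sum_i x_i=\risk(D)$. On the other side, subadditivity of the coherent $\risk$ gives
\begin{displaymath}
\sum_{i=1}^{d}\risk(C_i+\eta_i)\ \ge\ \risk\Big(\sum_{i=1}^{d}(C_i+\eta_i)\Big)\ =\ \risk\Big(D+\sum_{i=1}^{d}\eta_i\Big),
\end{displaymath}
and since $\sum_i\eta_i\le 0$, the monotonicity of $\risk$ forces $\risk(D+\sum_i\eta_i)\ge\risk(D)$. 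Chaining the two bounds produces
\begin{displaymath}
\risk(D)\ \le\ \sum_{i=1}^{d}\risk(C_i+\eta_i)\ \le\ \sum_{i=1}^{d}x_i\ =\ \risk(D),
\end{displaymath}
so every inequality is an equality. In particular $\sum_{i}\risk(C_i+\eta_i)=\risk(\sum_{i}C_i)$, which is exactly~\eqref{minrisk}.

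The main obstacle is the interpretation of the hypothesis: strictly speaking the restricted total risk is defined as an infimum, so ``does not increase'' needs to be read as ``the infimum value $\risk(D)$ is actually achieved by some feasible $x$ with an acceptable selection at $C+x$''. This is the natural reading given the use of the result later in the paper (one wants a concrete IGT realising the minimal capital), and without it only an $\varepsilon$-version of~\eqref{minrisk} could be written down via an approximating sequence. Modulo this reading, the rest is the short squeezing argument above.
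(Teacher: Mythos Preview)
Your argument is correct and follows essentially the same route as the paper: pick an $x$ realising the restricted total risk together with an acceptable selection of $\X(C+x)$, use cash invariance to bound $\sum_i\risk(C_i+\eta_i)$ from above by $\sum_i x_i\le\risk(D)$, and use subadditivity together with $\sum_i\eta_i\le0$ to bound it from below by $\risk(D)$, forcing equality throughout. The attainment issue you flag is genuine and is also tacitly assumed in the paper's own proof.
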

\begin{proof}
  If the total risk in the restricted setting does not increase in
  comparison with the unrestricted one, then there exists $x$ with
  $\sum x_i\leq \risk(\sum C_i)$, such that
  \begin{displaymath}
    \sum \risk(C_i+\eta_i+x_i)\leq 0
  \end{displaymath}
  for a selection $\eta$ of $\bI(C+x)$. The monetary property yields
  that
  \begin{displaymath}
    \sum \risk(C_i+\eta_i)\leq \sum x_i\leq \risk(\sum C_i),
  \end{displaymath}
  and the subadditivity property together with $\sum \eta_i\leq 0$
  yield the equality.
\end{proof}

\subsection{No transfers causing or worsening bankruptcy}
\label{sec:NTB}

For any policy holder, one of the most important events to be
avoided 
is bankruptcy of their counter-party.  Hence, in order to balance the
interests of all policy holders of all legal entities within the
group, a natural restriction for admissible IGTs could be to exclude
transfers that exceed the capital of a legal entity, if this capital
is positive. Furthermore, it could also be argued that it would not be
in line with policy holder interests of the bankrupt company if
their bankruptcy's dividend were reduced by intragroup
transactions.  It is also clear that fungibility is dramatically
restricted in case of bankruptcy, e.g.\ to a certain bankruptcy's
dividend if legally binding and enforceable intragroup contract
exist.
Hence, in order to at least partially protect policy holder interests
of a bankrupt subsidiary and in order to include some fungibility
aspects it seems reasonable to prohibit
transfers out of a company with negative capital.

To sum up, there are several reasons to exclude transfers turning a
non-bankrupt legal entity into a bankrupt one and also transfers out
of a bankrupt one to another one (from the same group). We use the
abbreviation \NTB\ (\emph{No Transfers causing or worsening
  Bankruptcy}) for this kind of IGTs.  Here we make the simplifying
assumption that bankruptcy is defined with respect to the terminal
capital position $C$ and not with respect to a different balance
sheet, i.e.\ transfers may not turn a non-negative component of $C$
into a negative one.

The corresponding set of attainable positions $\X(C)=C+\bINTB(C)$ is
given by
\begin{align*}
  \bINTB(C)&=\{(x_1,\dots,x_d):\;\sum x_i\leq 0,x_i\geq-C_i^+,\; i=1,\dots,d\},
\end{align*}
where $a^+=\max(a,0)$ for $a\in\R$. Note that \eqref{eq:iteration}
holds in this case and 
$\X(C+x)$ non-linearly depends on $x$. Furthermore, $\partial^+\X(C)$ is
$p$-integrably bounded and upper semicontinuous, so that
$\rhos(\X(C))$ and the group risk $\rhox(\X(\cdot),C)$ are closed
sets in the coherent case, see Proposition~\ref{prop:closed-total}.  

For a group consisting of two agents, the set $\X(C)$ of terminal
positions has vertices at $(C_1+C_2^+,C_2-C_2^+)$ and
$(C_1-C_1^+,C_2+C_1^+)$, see Figure~\ref{fig:ntb}. Therefore,
\begin{equation}
  \label{eq:support-ntb}
  h_{\X(C)}(u)=\langle C,u\rangle+
  \begin{cases}
    C_2^+(u_1-u_2),  & u_1\geq u_2,\\
    C_1^+(u_2-u_1),  & u_1<u_2,
  \end{cases}
\end{equation}
for $u=(u_1,u_2)\in\R_+^2$. 

\begin{figure}
  \centering
  \input{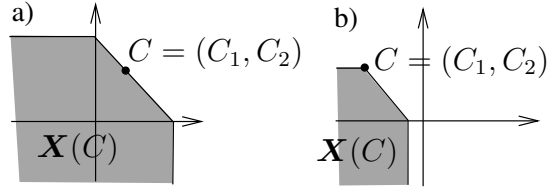}
  \caption{The set of attainable positions in case of both agents are solvent (a) and in case the second agent is solvent (b). }
  \label{fig:ntb}
\end{figure}

\subsection{Safety margin}
\label{sec:other-examples}

Allowing transfers to vanishing capital for some agents (as it is
possible under \NTB) may still be considered too progressive,
since the agents might end up with no capital buffer after IGTs, i.e.\
they are almost bankrupt.  In view of that it is worth noticing that
requirements on admissible IGTs can be made more stringent if the set
$\bI(C)$ is replaced by $\bI(C-a)$ for a fixed vector
$a=(a_1,\dots,a_d)$ with non-negative components that set
\emph{safety margins} for terminal capital amounts. If $\X(C)=C+\bI(C-a)$, then
\begin{align*}
  \rhox(\X(\cdot),C)&=\{x:\; \rhos(C+x+\bI(C+x-a))\ni 0\}\\
  &=a+\{x:\; \rhos(\X(C+x))\ni a\},
\end{align*}
so that the group risk is obtained by inverting the selection risk
measure at point $a$, cf.~\eqref{eq:rhox}.

In the case of proportional safety margins, $\bI(C)$ is replaced by
$\bI(C-\lambda C^+)$, where $\lambda
C^+=(\lambda_1C_1^+,\dots,\lambda_dC_d^+)$ for $\lambda\in [0,1]^d$.

\subsection{Fungibility costs}
\label{sec:add-fungibility}

So far we have assumed that capital can be transferred either in full
or not at all. Due to fungibility constraints, the capital may only
flow via complicated constructs that involve taking loans and cause
serious future fungibility costs, in particular for transitional
funding.  This can be included in our framework by adapting
the set $\X(C)$ of attainable terminal positions.

In the interest of brevity, we illustrate the construction with the
help of a bivariate example.  In the single currency setting
$\partial^+\X(C)$ is a curve that passes through $C=(C_1,C_2)$ and has
slope $-1$ at that point if $C_1,C_2>0$. The fungibility difficulties may be
modeled by changing the slope of this line according to increasing
fungibility costs for agents with low capital.  If the capital amounts
of the firms are large, and the transfers are small (so that the
capital amounts after transfer exceed thresholds $\bar c_1$ and $\bar
c_2$), it is well possible that fungibility costs from one company to
another one vanish.  However, if the capital of a company is small and
the transfer is relatively large, then this can cause serious
fungibility costs, which are getting larger the closer the company
comes to bankruptcy after a transfer. Then e.g.\ for a modified \NTB\
setting the boundary $\{x=(x_1,x_2):\; x\in\partial^+\X(C)\}$ can
be modeled using two differential equations
\begin{equation}
  \label{eq:ode-fungi}
  \frac{\partial x_2}{\partial x_1}
  =-\left(\frac{\bar c_2}{x_2}\right)^p\,,\qquad
  \frac{\partial x_1}{\partial x_2}
  =-\left(\frac{\bar c_1}{x_1}\right)^p
\end{equation}
depending on whether we want to transfer money from the second to the
first company (first equation, $0<x_2<\bar c_2$, in the considered
modified \NTB\ setting only relevant if $C_2>0$), or from the first to
the second company (second equation, $0<x_1<\bar c_1$, in the
considered modified \NTB\ setting only relevant if $C_1>0$). Close to
bankruptcy, the fungibility costs become immense and for all
non-positive values of capital of the donor company no transfer is
possible anymore. The parameter $p\geq 1$ can be used for calibration
to the company specific situations. The solution of these two
equations together with the initial condition $(x_1,x_2)=(C_1,C_2)$
produces a curve on the plane that yields $\partial^+\X(C)$. Note that
$\X(C) =\partial^+\X(C)+\R_-^2$ satisfies \eqref{eq:iteration}.

\subsection{Transaction costs}
\label{sec:transaction-costs}

Consider the case, where transfers are subject to transaction costs,
however unrestricted otherwise. In case of proportional transaction
costs, each recipient surrenders a proportion of the amount
determined by a factor $\pi\in[0,1]$. For simplicity, we assume $\pi$
to be deterministic and the same for all legal entities. For two legal
entities, the set of admissible IGTs for proportional transaction
costs is
\begin{displaymath}
  \bI=\{x=(x_1,x_2):\; x_1+\pi x_2\leq 0,\pi x_1+x_2\leq 0\},
\end{displaymath}
which is a convex cone that does not depend on the capital position
$C$. The corresponding risks have been studied in depth in
\cite{ham:hey10,ham:hey:rud11} and \cite{cas:mol14}. The results of
these papers for deterministic exchange cones apply in our
setting. In particular, $\rhox(\X(\cdot),C)=\rhos(C+\bI)$ is the selection
risk measure of the set-valued portfolio $C+\bI$.

In case of fixed transaction costs, each recipient surrenders a fixed
amount $a\geq0$, so that
\begin{displaymath}
  \bI=\R_-^2\cup \{x=(x_1,x_2):\; x_1+x_2\leq -a\}.
\end{displaymath}
This provides an example of a non-convex set of IGTs that also does
not depend on $C$.

A particularly important case of transaction costs relates to the case
where agents operate in different currencies. Then transfers between
the currencies are subject to transaction costs and may also involve
random exchange rates. In this case, it is also problematic to consider
the total risk, since there is no natural reference currency to
express the risks of all agents. We leave this setting for future
work.

\section{Unequally placed agents}
\label{sec:uneq-plac-agents}

Most financial groups exhibit some hierarchical structures, namely
there are parent and subsidiary agents, and even cross-holdings
between different members of the same group often exist. For
simplicity, consider the case of two agents: the first agent with
capital $S$ is a subsidiary and the second one with capital $C_2$ is a
holding (or parent).  The parent company owns an option on the (full)
available capital of the subsidiary, e.g.\ via liquidation, i.e.\ the
parent has a long position in the derivative with payoff
$S^+=\max(0,S)$. Assume that the parent has no other assets or
liabilities, so that her capital is $S^+$.

In this case, adding capital amounts would incur double counting of the
positive part of $S$. Such double counting is typically excluded in
previous works on optimal risk sharing, see~\cite{fil:kun08}.  In our
framework, it is possible to avoid double counting by adjusting
the random set of attainable positions.

In view of the fact that a parent company ultimately has the legal
right to get the (full) capital from the subsidiary, without an
additional contract the parent can add the positive part of the
capital of the subsidiary to its assets, and hence, to its
capital. Consequently the subsidiary acquires the corresponding short
position, which has to be considered in the risk calculation of the
subsidiary, so that its consistent capital becomes $-S^-\leq 0$, where
$S^-=(-S)^+$ is the so-called limited liability put option. Thus, the
net capital of the subsidiary is never strictly positive, for
instance, the subsidiary would never be acceptable, except in cases
where the capital of the subsidiary after taking into account the
participations is concentrated at zero, which might happen, if the
subsidiary were ``long only'' in assets while being completely
financed by equity.

For the capital position $C=(C_1,C_2)=(-S^-,S^+)$, in the strictly
granular setting the random set of attainable positions is given by
\begin{displaymath}
  \Y(C)=
  \begin{cases}
    (-\infty,0]\times (-\infty,S] & \text{if} \; S\geq 0,\\
    (-\infty,S]\times (-\infty,0] & \text{otherwise},
  \end{cases}
\end{displaymath}
so that $\partial^+\Y(C)=(-S^-,S^+)$.  Assume that the risks of the
both group members are assessed using the same coherent risk measure
$\risk$.  Thus, in the strictly granular setting, the total risk
amounts to
\begin{displaymath}
  \rhot(\Y(\cdot),C)=\risk(S^+)+\risk(-S^-)\geq \risk(S^+-S^-)=\risk(S).
\end{displaymath}

Allowing for IGTs, for any state of the world one has to fix the
transfer from the parent to the subsidiary, which is the only feasible
transfer direction. For this, the parent obtains a loan $a$ secured
upon $S^+$, and we do not take into account any fungibility
difficulties that might occur in this relation. From this loan a
random amount $\eta$ is due to be transferred to the subsidiary at the
terminal time.  If $S\geq0$, then the transferred amount is
immediately recovered and the loan is repaid, while if $S<0$, then the
parent recovers $(S+\eta)^+$. The set $\X(C)$ of attainable positions
is characterized by $\partial^+\X(C)$ given by
$(-(S+\eta)^-,(S+\eta)^+-\eta)$ for all possible transfers $\eta$,
which are random variables taking values in $[0,a]$.
The corresponding ``total risk'' becomes
\begin{displaymath}
  \risk(-(S+\eta)^-)+\risk((S+\eta)^+-\eta)\geq \risk(S)
\end{displaymath}
in view of the subadditivity of the risk measure. In the hypothetical
case of an unlimited credit line, the equality, and thus, the optimal
transfer is achieved if $\eta=S^-$.

The policy holders of a
parent could benefit from the investment in the subsidiary without
directly affecting reserves of the policy holders of the subsidiary,
if the parent sells the subsidiary at the terminal time to a third
party.  Since this strategy needs a buyer in a potentially stressed
market it is possibly, from the initial time point of view,
recommendable not to fully take into account this possibility in a prudent
regulation.

\section{Diversification effects}
\label{sec:divers-effects}

Filipovi{\'c}~and Kunz~\cite{fil:kun08} present a 
bottom-up approach to analyse intragroup diversification in a very
concrete setting with given distributions and choice of predefined
IGTs. For a rather recent similar analysis, see also~\cite{mas13}.

The key property of coherent risk measures is their subadditivity that
corresponds to the fact that diversification decreases risk. The
non-linear feature of the IGTs brings new features to the
diversification effects. For example, in many cases, e.g.\ in the \NTB\
case,
\begin{equation}
  \label{eq:reverse-effects}
  \bI(C'+C'')\subset \bI(C')+\bI(C'')\,,
\end{equation}
where both $C'$ and $C''$ are $d$-dimensional vectors of capital
values. Then the diversification of assets and liabilities narrows the
range of admissible IGTs. On the other side, there exists a classical
benefit from diversification effects. A similar situation arises in
the diversification effects for systemic risks, see
\cite{fein:rud:web15}.

\begin{example}[Univariate case]
  In order to understand the diversification effects in the group
  setting, consider the one-dimensional case from the point of view of
  the group. Let $C'$ and $C''$ be two $p$-integrable random variables
  and let $\risk$ be a univariate coherent $\Lp$-risk measure. Define
  $C=(C',C'')$ to be a random vector in $\R^2$ and let
  $\vecrisk=(\risk,\risk)$. In view of the subadditivity of $\risk$
  the total risk of $C$ with the IGTs given by $\bI(C)=\H$ equals
  $\risk(C'+C'')$, while if $\bI(C)=\R_-^2$, then the total risk
  becomes $\risk(C')+\risk(C'')$. Thus, the classical diversification
  benefit can be phrased as the advantage that corresponds to
  increasing the set of admissible IGTs from $\R_-^2$ to $\H$, in
  other words, from altering the strictly granular approach to the
  unconstrained one. Related to that, it should be stressed that a
  solvency test should never include unrealistic transfers, since
  otherwise the solvency test tends to underestimate the real risks a
  group is faced with. Furthermore, it should be noted that classical
  diversification compares $\risk(C')+\risk(C'')$ with the risk of a
  particular selection of $\X(C)=C+\H$, namely that of $(C'+C'',0)$.
\end{example}

The classical concept of diversification is inherent for a single
agent, who might have several business units with unrestricted capital
flows between them. In case of groups, we see two basic effects:
\begin{itemize}
\item \emph{consolidation} that amounts to increasing the set of
  admissible IGTs;
\item \emph{granularization} that corresponds to restricting the
  family of admissible IGTs.
\end{itemize}
In particular, a merger of two legal entities removes all fungibility
barriers between them, as in the case of unconstrained approach, and
so is a simple example of consolidation.  On the contrary, a split may
lead to some
additional restrictions in capital transfers that can be viewed as
granularization.

\begin{example}
  Consider the group $C=(C_1,\dots,C_d)$ and assume that the first
  agent splits its operation into two subsidiaries so that
  $C_1=C_{11}+C_{12}$. The effect of such granularization on the total
  risk depends on the set of admissible transfers between the two
  created subsidiaries and between them and the rest of the group. For
  instance, the total risk is retained if the two subsidiaries are
  considered on the unconstrained basis.
\end{example}

\begin{example}
  Consider a single agent with terminal capital $\tilde C$ that is
  split into $C=(C_1,C_2)$, so that $\tilde C=C_1+C_2$. The total risk
  after such granularization under a coherent risk measure $\risk$
  lies between $\risk(C_1+C_2)=\risk(\tilde C)$ in the unconstrained
  case and $\risk(C_1)+\risk(C_2)$ in the strictly granular
  setting. 
\end{example}

\begin{example}
  If two groups are merged, a classical question is, whether or not
  also some of the legal entities (like two life- or non-life
  companies) should also be merged.  Consider two groups
  $C'=(C'_1,\dots,C'_{d'})$ and $C''=(C''_1,\dots,C''_{d''})$. Their
  merge creates a new group $C$ with $d'+d''$ legal entities, so that
  one has to specify the family $\bI(C)\subset\R^{d'+d''}$ of
  admissible IGTs. It is natural to assume that $\bI(C)\supset
  \bI'(C')\times\{0\}$ and $\bI(C)\supset \{0\}\times \bI''(C'')$,
  meaning that the families $\bI'(C')$ and $\bI''(C'')$ of admissible
  IGTs within each of two primary groups are admissible after the
  merge. Still the larger group may be allowed to transfer capital
  between the components of $C'$ and $C''$. In view of this,
  \begin{displaymath}
    \rhot(\X(\cdot),C)\leq \rhot(\X'(\cdot),C')+\rhot(\X''(\cdot),C'').
  \end{displaymath}
  Note that the coherence of components of $\vecrisk$ is not needed
  for this.  Hence, if two groups are merged without any merger of the
  legal entities we can expect (e.g.\ if the above assumptions are
  satisfied) the total risk to be subadditive. However, if we start
  to merge also legal entities, the situation is less clear due to
  reverse effects from~\eqref{eq:reverse-effects}.
\end{example}

Thus, in the context of risk assessment for groups, the diversification
advantage can be formulated as follows.
\begin{quote}
  Granularization does not diminish the risk, while
  consolidation does not increase the risk.
\end{quote}
This fact is the sole monotonicity property of the set of admissible
IGTs and does not rely on the subadditivity property of the involved risk
measures.

\section{Calculating the group risk}
\label{sec:num-examples}

The calculation of the group risk requires finding $x$ such that the
selection risk measure of $\X(C+x)$ contains the origin. This is a
serious computational problem, that can be solved by means of
multicriterial optimization algorithms, see
e.g.~\cite{ham:rud:yan13}. However, bounds on the group risk can be
obtained as follows, see Theorem~\ref{cor:lower-bound}.

\begin{proposition}
  \label{prop:bounds}
  Let $\vecrisk=(\risk,\dots,\risk)$ be a coherent $\Lp$-risk measure
  with identical components. Let $\xi(x)$ be any selection of
  $\X(C+x)$ for $x\in\R^d$. Then
  \begin{equation}
    \label{eq:bounds}
    \{x\in\R^d:\; \vecrisk(\xi(x))\leq 0\}\subset \rhox(\X(\cdot),C)
    \subset\bigcap_{u\in\R_+^d} \{x\in\R^d:\;
    \risk(h_{\X(C+x)}(u))\leq 0\}.
  \end{equation}
\end{proposition}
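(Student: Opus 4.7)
\medskip

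\noindent\textbf{Proof plan.} The two inclusions in~\eqref{eq:bounds} are essentially immediate consequences of the definitions together with Theorem~\ref{cor:lower-bound}, so the plan is to unfold each side separately.

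For the inner (left-hand) inclusion, suppose $x\in\R^d$ is such that $\vecrisk(\xi(x))\leq 0$, where $\xi(x)$ is some selection of $\X(C+x)$. Then $\xi(x)$ is by definition an acceptable selection of $\X(C+x)$, so $\X(C+x)$ is acceptable. This means $0\in\rhonot(\X(C+x))\subset \rhos(\X(C+x))$, and by Definition~\ref{def:group-risk} we obtain $x\in\rhox(\X(\cdot),C)$. This first step is routine; no real obstacle.

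For the outer (right-hand) inclusion, take $x\in\rhox(\X(\cdot),C)$, which by definition means $0\in\rhos(\X(C+x))$. Apply Theorem~\ref{cor:lower-bound} to the set-valued portfolio $\X(C+x)$ (the hypothesis that $\vecrisk$ has identical coherent components is exactly what that theorem requires), giving
\begin{displaymath}
  \rhos(\X(C+x))\subset \bigcap_{u\in\R_+^d}\bigl\{y\in\R^d:\; \langle y,u\rangle\geq \risk(h_{\X(C+x)}(u))\bigr\}.
\end{displaymath}
Plugging in $y=0$ yields $0=\langle 0,u\rangle\geq \risk(h_{\X(C+x)}(u))$ for every $u\in\R_+^d$, that is $x$ lies in the right-hand side of~\eqref{eq:bounds}.

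There is no genuine obstacle here; the only thing to verify is that Theorem~\ref{cor:lower-bound} is applicable to each individual random set $\X(C+x)$, which is automatic since the hypotheses on $\vecrisk$ are the same for every $x$ and $\X(C+x)$ is by construction a lower $p$-integrable random closed set. The convention $\risk(h_{\X(C+x)}(u))=-\infty$ whenever $h_{\X(C+x)}(u)=\infty$ with positive probability makes the inequality $\risk(h_{\X(C+x)}(u))\leq 0$ trivially satisfied in that case, so no additional integrability assumption is needed.
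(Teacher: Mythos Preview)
Your proof is correct and is exactly the approach the paper intends: the paper does not spell out a proof but simply points to Theorem~\ref{cor:lower-bound}, and your argument is the straightforward unfolding of that reference together with Definition~\ref{def:group-risk}. There is nothing to add.
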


In the following, assume that $\vecrisk$ has all identical coherent
components.  In case of two agents, the calculation of
the superset for the group risk can often be simplified by the following
proposition.

\begin{proposition}
  \label{prop:convex-combination}
  Assume that $\X(C)$ is almost surely convex in $\R^2$ with
  $\partial^+\bI(C)\subset \{x=(x_1,x_2):\; x_1+x_2=0\}$. Then the
  superset in \eqref{eq:bounds} does not change if the intersection is
  reduced to $u\in\{(1,0),(0,1),(1,1)\}$.
\end{proposition}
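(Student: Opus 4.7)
The plan is to reduce the intersection over $u\in\R_+^2$ to the three distinguished directions by showing that, under the stated geometric assumption, $h_{\X(C+x)}(u)$ is an \emph{exact} nonnegative combination of its values at $(1,0)$, $(0,1)$, $(1,1)$, after which subadditivity and positive homogeneity of the coherent risk measure $\risk$ push the constraint through.

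First I would use that $\bI(C+x)$ is a convex lower set containing the origin with $\partial^+\bI(C+x)\subset\{y_1+y_2=0\}$ to conclude that $\bI(C+x)$ is the $\R_-^2$-lower closure of a (possibly unbounded, possibly degenerate) segment on the line $\{y_1+y_2=0\}$ through the origin. Hence $\partial^+\X(C+x)$ is a segment on the parallel line $\{y_1+y_2=(C_1+x_1)+(C_2+x_2)\}$; denote its rightmost point (in the $y_1$-coordinate) by $y^{*}$ and its leftmost one by $y_{*}$, allowing either to lie at infinity. Since $\X(C+x)$ is a lower set, the supremum defining $h_{\X(C+x)}(u)$ for $u\in\R_+^2$ is attained on this segment, and on a segment of direction $(1,-1)$ the linear functional $\langle u,\cdot\rangle$ has slope $u_1-u_2$ along the segment, hence is maximized at $y^{*}$ when $u_1>u_2$, at $y_{*}$ when $u_1<u_2$, and at every point of the segment when $u_1=u_2$.

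For $u\in\R_+^2$ with $u_1>u_2$, the point $y^{*}$ simultaneously maximizes $\langle u,\cdot\rangle$, $\langle(1,0),\cdot\rangle$ and $\langle(1,1),\cdot\rangle$ over $\X(C+x)$ (the last together with every Pareto point, since the whole Pareto segment lies on a single level set of $y_1+y_2$). Writing $u=(u_1-u_2)(1,0)+u_2(1,1)$ yields the pointwise identity
\begin{equation*}
  h_{\X(C+x)}(u)=(u_1-u_2)\,h_{\X(C+x)}((1,0))+u_2\,h_{\X(C+x)}((1,1)),
\end{equation*}
and applying $\risk$, with its subadditivity and positive homogeneity, gives
\begin{equation*}
  \risk(h_{\X(C+x)}(u))\leq(u_1-u_2)\,\risk(h_{\X(C+x)}((1,0)))+u_2\,\risk(h_{\X(C+x)}((1,1))).
\end{equation*}
Thus, if $x$ lies in the reduced intersection, both right-hand terms are nonpositive and $x$ lies in the constraint set for $u$. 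The case $u_1<u_2$ is symmetric via $u=(u_2-u_1)(0,1)+u_1(1,1)$ and $y_{*}$, and $u_1=u_2>0$ reduces to $(1,1)$ by positive homogeneity alone; the reverse inclusion is immediate since $\{(1,0),(0,1),(1,1)\}\subset\R_+^2$.

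The delicate point concerns the case where $y^{*}$ or $y_{*}$ lies at infinity with positive probability, as happens for $\bI=\H$: then $h_{\X(C+x)}((1,0))$ equals $+\infty$ on a positive-probability event, its $\risk$-value is $-\infty$ by the convention of Theorem~\ref{cor:lower-bound} and the constraint at $(1,0)$ is vacuous; however on the same event $h_{\X(C+x)}(u)=+\infty$ for every $u$ with $u_1>u_2$, so the constraint at $u$ is also vacuous, and membership in the reduced intersection still forces membership in the full one.
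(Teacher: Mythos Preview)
Your proof is correct and follows essentially the same route as the paper: both arguments express a general $u\in\R_+^2$ as a nonnegative combination of two of the three distinguished directions, observe that $h_{\X(C+x)}(u)$ decomposes \emph{exactly} in the same way because the Pareto boundary is a segment of direction $(1,-1)$, and then invoke subadditivity and positive homogeneity of the coherent $\risk$. Your treatment is in fact slightly more careful than the paper's, since you explicitly discuss the $u_1=u_2$ case, the trivial reverse inclusion, and the possibility that the Pareto segment is unbounded (making some support function values $+\infty$), none of which the paper addresses.
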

\begin{proof}
  It suffices to show that if $\risk(h_{\X(C+x)}(u))\leq 0$ for the
  three above mentioned $u$, then the inequality holds for all
  $u\in\R_+^2$. Without loss of generality assume that $x=0$.  The
  condition means that $\partial^+\bI(C)$ is the segment with two
  end-points $(\zeta_1,-\zeta_1)$ and $(-\zeta_2,\zeta_2)$, where
  $\zeta_1,\zeta_2$ are two non-negative random variables that might
  depend on $C$. Then
  \begin{displaymath}
    h_{\X(C)}(u)=\langle C,u\rangle+
    \begin{cases}
      \zeta_1(u_1-u_2) & \text{if } u_1>u_2,\\
      \zeta_2(u_2-u_1) & \text{if } u_2\geq u_1.
    \end{cases}
  \end{displaymath}
  If $u_1>u_2$, then
  \begin{displaymath}
    h_{\X(C)}(u)=(u_1-u_2)h_{\X(C)}((1,0))+u_2h_{\X(C)}((1,1)).
  \end{displaymath}
  The coherency of the risk measure yields that the risk of
  $h_{\X(C)}(u)$ is acceptable if both $h_{\X(C)}((1,0))$ and
  $h_{\X(C)}((1,1))$ are. The case of $u_2>u_1$ is similar.
\end{proof}

In the following we choose the \NTB\ restrictions for two agents,
possibly with safety margins, where
Proposition~\ref{prop:convex-combination} clearly applies. In case of
a fixed safety margin $(a_1,a_2)$, the group risk is a subset of
\begin{multline}
  \label{eq:three-planes}
  \{x=(x_1,x_2):\; \risk(C_1+x_1+(C_2+x_2-a_2)^+)\leq 0,\\
  \risk(C_2+x_2+(C_1+x_1-a_1)^+)\leq 0, \risk(C_1+C_2)\leq x_1+x_2\}.
\end{multline}
Note that the last inequality defines a half-plane corresponding to
the unconstrained setting.  In the case of zero safety margin, the
first two inequalities are superfluous, and so the superset from
Proposition~\ref{prop:bounds} does not differ from the group risk in
the unconstrained setting. Indeed, if $x_1+x_2\geq \risk(C_1+C_2)$,
then without loss of generality we can assume the
equality, so that $x_1=\risk(C_1+C_2)-x_2$. Then the first inequality
in \eqref{eq:three-planes} requires
\begin{displaymath}
  \risk(C_1+(C_2+x_2)^+)\leq \risk(C_1+C_2)-x_2,
\end{displaymath}
or, equivalently,
\begin{displaymath}
  \risk(C_1+C_2+(C_2+x_2)^+-(C_2+x_2))\leq \risk(C_1+C_2),
\end{displaymath}
which always holds, since $(C_2+x_2)^+\geq(C_2+x_2)$.

In order to obtain a subset of $\rhox(\X(\cdot),C)$, choose the selection
$\xi(x)$ as the point of $\partial^+\X(C+x)$ that is nearest to the
diagonal line $\{(x_1,x_2):x_1=x_2\}$. Then $\xi(0)=\frac{1}{2}(D,D)$
if $D=C_1+C_2\geq 0$. If $D<0$, then $\xi(0)=(D,0)$ in case
$C_2\geq 0$, $\xi(0)=(0,D)$ if $C_1\geq0$, and $\xi(0)=(C_1,C_2)$ if
$C_1,C_2<0$. Thus, $\xi(0)$ is acceptable if
\begin{align*}
  \risk(\frac{1}{2} D
  \one_{D\geq0}+D\one_{D<0,C_2\geq 0}+C_1\one_{C_1<0,C_2<0})&\leq 0,\\
  \risk(\frac{1}{2} D
  \one_{D\geq0}+D\one_{D<0,C_1\geq 0}+C_2\one_{C_1<0,C_2<0})&\leq 0.
\end{align*}
The inner approximation in \eqref{eq:bounds} is obtained as the set of
all $x=(x_1,x_2)$ such that the above inequalities hold with
$C=(C_1,C_2)$ replaced by $(C_1+x_1-a_1,C_2+x_2-a_2)$, where
$(a_1,a_2)$ denotes the fixed safety margin. An upper bound for the
total risk is obtained by finding the minimum of $x_1+x_2$ that
satisfy these inequalities.

\begin{example}
  Consider the \NTB\ setting without safety margin and with
  $\vecrisk=(\ES_{0.01},\ES_{0.01})$ as the underlying risk measure.
  Figure~\ref{fig:uni-exp}(a) shows the superset and the subset of the
  group risk for i.i.d.\ $C_1,C_2$ having the uniform distribution on
  $[0,5]$. The outer bound equals the group risk in
  the unconstrained setting. The upper right corner shows the group
  risk in the strictly granular case. While the bounds for
  $\rhox(\X(\cdot),C)$ given by Proposition~\ref{prop:bounds} clearly
  differ, they both yield the same value of the total risk. This means
  that the total risk in the \NTB\ setting for this example
  coincides with the unconstrained group risk.
  While the same phenomenon appears in all simulated examples with
  exchangeable $(C_1,C_2)$, we do not have a theoretical confirmation
  of this observation.

  Figure~\ref{fig:uni-exp}(b) shows the bounds for the group risk in
  case $C_1$ has the standard normal distribution independent of
  $-C_2$ having the exponential distribution of mean one. In this case
  the inner approximation to the group risk does not touch the outer
  approximation, however close they are.

  \begin{figure}
    \centering
    \begin{tabular}{cc}
      \subfigure[]{\includegraphics[width=6cm,height=6cm]{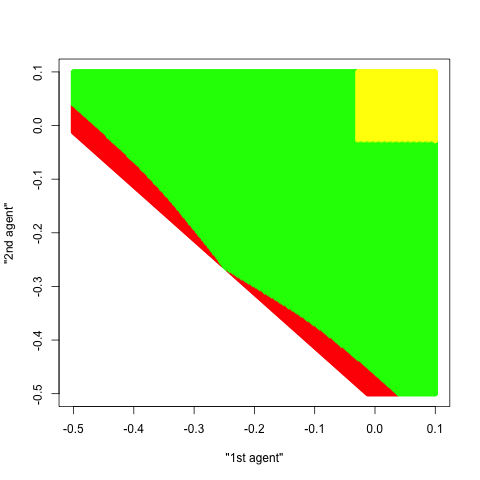}}
      &
      \subfigure[]{\includegraphics[width=6cm,height=6cm]{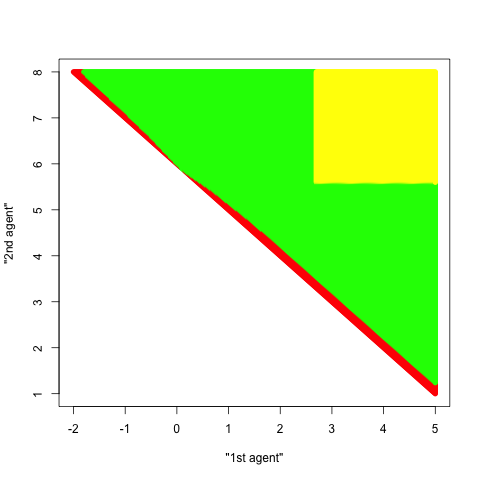}}
    \end{tabular}
    \caption{Bounds for the group risk for the uniform distribution
      (a) and for combination of normal and exponential distributions
      (b). The strictly granular group risk is shown in yellow, the
      inner approximation in green, the outer approximation in red (it
      coincides with the unconstrained group risk).}
    \label{fig:uni-exp}
  \end{figure}
\end{example}

\begin{example}
  Take $\vecrisk=(\ES_{0.01},\ES_{0.01})$ and
  let $(C_1,C_2)$ be normally distributed with mean zero and the
  covariance matrix
  \begin{displaymath}
    \begin{pmatrix}
      1 & -0.5 \\
      -0.5 & 3
    \end{pmatrix}.
  \end{displaymath}
  Figure~\ref{fig:norm}(a) shows approximations to the group risk
  without safety margin. Here the outer approximation coincides with
  the group risk in the unconstrained setting, while the inner
  approximation touches it and so shows that the total risk in the
  \NTB\ setting coincides with the unconstrained total risk.
  Figure~\ref{fig:norm}(b) shows the results for the fixed safety
  margin set to $0.5$ for the both agents. In this case the outer
  approximation in \eqref{eq:bounds} coincides with the inner
  approximation and so yields the group risk. The outer set
  corresponds to the unconstrained
  setting. 
  An indication for the high potential of IGTs for intragroup
  diversification is seen by comparing the strictly granular
  group risk (shown in yellow) with other ones.

  \begin{figure}
    \centering
    \begin{tabular}{cc}
      \subfigure[]{\includegraphics[width=6cm,height=6cm]{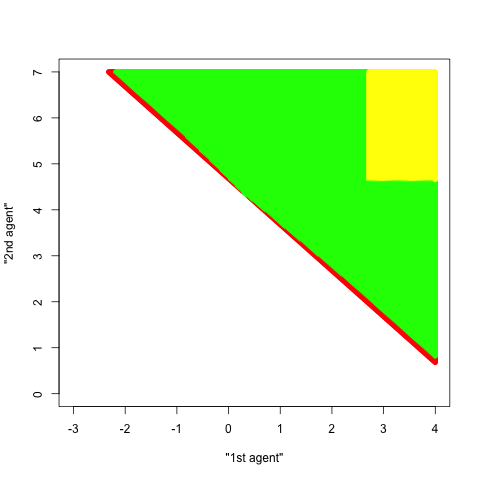}}
      &
      \subfigure[]{\includegraphics[width=6cm,height=6cm]{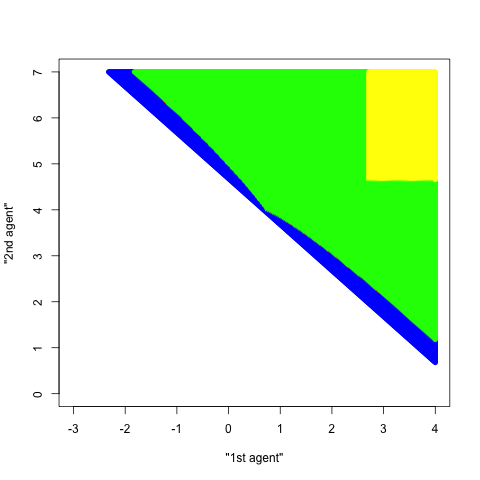}}
    \end{tabular}
    \caption{Bounds for the group risk for the normally distributed
      assets without safety margin (a) and with fixed safety margin of
      $0.5$ for both agents (b). The strictly granular group risk
      is shown in yellow, the inner approximation in green, the outer
      approximation in red, and the unconstrained group risk in blue.}
    \label{fig:norm}
  \end{figure}
 \end{example}




\end{document}